\newcommand{\ls}[2]{#1}
	\newtheorem{lemma}{Lemma}
\newcommand{\figscale}{\ls{1}{0.70}}
\newcommand{\figspace}{\vspace*{-0.2em}}
\newcommand{\Oh}{\mathcal{O}}
\newcommand{\tw}{{tw}}
\newcommand{\given}{\hat}
\newcommand{\td}{\mathcal{T}}
\newcommand{\TD}{\mathcal{TD}}
\renewcommand{\L}{\mathcal{L}}
\renewcommand{\E}{\mathcal{E}}}
\newcommand{\ignore}[1]{}
\newcommand{\poly}{\rm poly}
\newtheorem{construction}{\bf Construction}
\begin{document}

\ls{
	\title{Optimal Tree Decompositions Revisited: A Simpler Linear-Time FPT Algorithm}
	
	\author[1]{Ernst Althaus}
	\author[1]{Sarah Ziegler}

	\affil[1]{Johannes Gutenberg-Universit\"at Mainz}	
                                                  

}{
\author{Ernst Althaus and Sarah Ziegler}
\authorrunning{E. Althaus and S. Ziegler}
\title*{Optimal Tree Decompositions Revisited: A Simpler Linear-Time FPT Algorithm}
\titlerunning{Optimal Tree Decompositions Revisited}

}

	\maketitle
	
	\begin{abstract}{
		In 1996, Bodlaender showed the celebrated result that an optimal tree decomposition of a graph of bounded treewidth can be found in linear time. The algorithm is based on an algorithm of Bodlaender and Kloks that computes an optimal tree decomposition given a non-optimal tree decomposition of bounded width. Both algorithms, in particular the second, are hardly accessible. \ls{In our review, we present them in a much simpler way than the original presentations.}{We present the second algorithm in a much simpler way in this paper and refer to an extended version for the first.} In our description of the second algorithm, we start by explaining how all tree decompositions of subtrees defined by the nodes of the given tree decomposition can be enumerated. We group tree decompositions into equivalence classes depending on the current node of the given tree decomposition, such that it suffices to enumerate one tree decomposition per equivalence class and, for each node of the given tree decomposition, there are only a constant number of classes which can be represented in constant space. \ls{Our description of the first algorithm further simplifies Perkovic and Reed's simplification.}{}}
	\end{abstract}

	\section{Introduction}
	
	
	
	
	
	
	Tree decompositions and treewidth are important concepts in parameterized complexity and are therefore introduced in many textbooks on graph-theory, graph algorithms, or parameterized complexity, e.g.,~\ls{\cite{DBLP:series/txtcs/FlumG06, niedermeier2006invitation, DBLP:series/txcs/DowneyF13,DBLP:books/daglib/0030488, Kloks13advancesin,DBLP:books/sp/CyganFKLMPPS15}}{\cite{DBLP:series/txtcs/FlumG06, niedermeier2006invitation, DBLP:series/txcs/DowneyF13,DBLP:books/sp/CyganFKLMPPS15}}. They are even introduced in the basic algorithms book of Kleinberg and Tardos \cite{DBLP:books/daglib/0015106}. Many NP-hard problems are fixed-parameter tractable in the treewidth -- i.e.,~for a graph $G=(V,E)$ of treewidth $\tw$, they can be solved in time $\Oh(f(\tw) \cdot \poly(|V|,|E|))$ for a computable function $f$ and a polynomial $\poly$. A necessary condition for these algorithms is that a tree decomposition with a similar width as the treewidth of the graph can be computed. In most textbooks, a (rather complicated) algorithm that computes a tree decomposition of width at most $4\tw$ is shown.
	
	Bodlaender \cite{DBLP:journals/siamcomp/Bodlaender96} proved that an optimal tree decomposition of a graph $G$ with fixed treewidth can be computed in linear time. Since its publication, this paper has been cited more than 1,500 times. The algorithm, however, does not appear in textbooks, as its presentation is way too complicated. We aim to give a simpler description in order to make the algorithm accessible to a wider audience.
	
	The algorithm is based on an algorithm by Bodlaender and Kloks \cite{DBLP:journals/jal/BodlaenderK96} that computes an optimal tree decomposition in linear time, if a (non-optimal) tree decomposition of fixed width is given.  
	\ls{One can compute the tree width exactly by applying this algorithm to the tree decomposition of width at most $4\tw$ that was constructed by the algorithm mentioned above. Since the algorithm to compute the non-optimal tree decomposition is not linear-time, the overall running time is not linear.}{}
	Bodlaender \cite{DBLP:journals/siamcomp/Bodlaender96} shows how, given a graph $G$, one can find a graph $G'$ of at most the same treewidth that is a constant factor smaller and it is easy to construct a tree decomposition for $G$ of width $2\tw$ from an optimal tree decomposition of $G'$. The algorithm has linear running time. Together with the algorithm of Bodlaender and Kloks, this gives an exact linear time algorithm:
	\begin{itemize}
		\item Compute $G'$.
		\item Compute an optimal tree decomposition of $G'$ recursively.
		\item Construct a tree decomposition of $G$ of width at most $2\tw$ from the tree decomposition of $G'$.
		\item Use the algorithm of Bodlaender and Kloks to find the optimal tree decomposition.
	\end{itemize}

	Several attempts were made to simplify the construction of an appropriate graph $G'$ (i.e., with the properties mentioned above) from $G$ (see e.g.,~\cite{DBLP:journals/ijfcs/PerkovicR00, DBLP:series/txcs/DowneyF13}). \ls{Hence, we mainly worked on the algorithm of Bodlaender and Kloks and devoted only a small section of this paper on the overall algorithm, since it basically follows the simplification of Perkovic and Reed \cite{DBLP:journals/ijfcs/PerkovicR00}.}{Hence, we only show the main ideas of our work on the algorithm of Bodlaender and Klocks in this extended abstract and refer to \cite{arxiv} for more details the overall algorithm.}
	
	\ls{To simplify the algorithm of Bodlaender and Kloks, we start with an easy algorithm that enumerates all tree decompositions using the given one of bounded width. More precisely, we assume that we are given a nice tree decomposition. We show how to construct all tree decompositions of the subgraphs induced by the vertices in the bags below a node of the given tree decomposition from the ones of the child nodes. We then group tree decompositions into equivalence classes depending on the current node of the given tree decomposition so that it suffices to enumerate one tree decomposition per class. This is done in three steps, coarsening the equivalence classes until only a constant number of classes per node of the given tree decomposition remain. Furthermore, we show that these classes can be represented in constant space and that these representations can be computed directly from the corresponding representations of the child nodes.}{}
	
\ls{The algorithm of Bodlaender has linear time if the treewidth of the given graph is bounded. More precisely, the running time is in $2^{\Oh(\tw^3)}|V(G)|$. Since the invention of this algorithm there was a lot of research (see \cite{BodlaenderDDFLP16} for an overview) on exact and approximate algorithms to improve the dependence on the treewidth as this is the bottleneck in the running time of many FPT-algorithms based on tree decompositions. Notice that if the FPT-algorithm has a running time of $f(\tw) \cdot \poly(n)$, the running time increases with a worse approximation ratio as $\tw$ has to be replaced by the guaranteed width of the tree decomposition found. Nevertheless, if we have an approximation algorithm with a significantly better running time dependence on the treewidth, this typically gives a better total running time of the overall algorithm if $f \in o(2^{\tw})$. Despite the intensive research, only recently Bodlaender et al.\cite{BodlaenderDDFLP16} gave an algorithm which finds a tree decomposition of width at most $5\tw+4$ with a running time that is linear in the size of the graph and single exponential in the treewidth.

}{}
	
	\section{Definitions and Basic Properties}
	\label{basic}
	
	\ls{
	In this Section, we give some definitions and sketch the proofs of some basic properties that we will use later in the paper. All definitions are standard and the properties are well known (see, e.g.,~\cite{DBLP:books/sp/CyganFKLMPPS15}).
	
	For a finite set $X$ we denote its power set by $2^X$ and for a function $f:X \rightarrow 2^Y$ and $X' \subseteq X$ let $f(X')=\bigcup_{x \in X'} f(x)$ (and similarly for multivariate functions).
	
	For an undirected graph $G=(V,E)$, let $V(G)=V$ be its vertices and $E(G)=E$ be its edges. Given an edge $uv \in E(G)$, shrinking the edge means replacing $G$ by the graph $G^{uv}$ with $V(G^{uv})=V(G) \setminus \{ u, v \} \cup \{ [uv] \}$, where $[uv] \notin V(G)$ is a new node and $E(G^{uv})$ is the set $E(G) \setminus \{ uv \}$ but endpoints $u$ or $v$ of the edges are replaced by $[uv]$. A minor of a graph $G$ is a graph that can be constructed from $G$ by shrinking or deleting some of the edges. For $V' \subseteq V$, the induced subgraph of $V$, denoted as $G[V']$, is the graph with vertex set $V'$ and edge set $E'=\{uv \in E \mid u,v \in V' \}$.
	A tree is a connected undirected graph without simple cycles.

	A tree decomposition $(T, (X_t)_{t \in T})$ for a graph $G$ is a tuple of a tree $T$ over some set of vertices $V(T)$ and subsets of vertices $X_t \subseteq V(G)$ of $G$, one for each vertex in $T$, such that the following three properties hold:
	\begin{description}
		\item[(Node coverage)] For each $v \in V(G)$ there is at least one $t \in V(T)$ such that $v \in X_t$.
		\item[(Edge coverage)] For each $uv \in E(G)$ there is at least one $t \in V(T)$ such that $u$ and $v$ are in $X_t$.
		\item[(Coherence)] If $v \in X_{t_1}$ and $v \in X_{t_2}$ for $v \in V(G)$ and $t_1, t_2 \in V(T)$ then $v \in X_{t_3}$ for all vertices $t_3$ on the unique path in $T$ from $t_1$ to $t_2$.
	\end{description}
	
	The sets $X_t$ are called the bags.
	The width of a tree decomposition $(T, (X_t)_{v \in V(T)})$ is the maximal cardinality of one of the sets $X_t$ minus one, i.e., $\max_{t \in V(T)} |X_t|-1$. The treewidth of a graph $G$ is the minimal width of a tree decomposition of $G$.
	
	To make a clearer distinction between the vertices $V(G)$ of the graph and those of the tree of the tree decomposition, we call the elements of $V(T)$ nodes and the elements of $V(G)$ vertices in the following. Notice that for all $V'=\{t \in V(T) \mid v \in X_t \}$ for some $v \in V(G)$ the coherence implies that $T[V']$ is a tree (i.e., connected). On the other hand, if $T[V']$ is a tree for all of those subsets, this implies coherence. Furthermore, if each node of a graph $G$ has at least one adjacent edge, the edge coverage implies the node coverage. In particular, this holds for connected graphs.

	If $(T,(X_t)_{t \in V(T)})$ is a tree decomposition of $G$, then it is a tree decomposition of all graphs $(V(G), E')$ for $E' \subseteq E$. Furthermore, for $uv \in E$, the tree decomposition $(T, (X'_t)_{t \in T} )$ with $X'_t$ obtained from $X_t$ by replacing each occurrence $u$ or $v$ by $[uv]$, is a tree decomposition of the graph $G^{uv}$. The width of this tree decomposition is at most the width of $(T,(X_t)_{t \in V(T)} )$. Hence a minor of a graph has at most the treewidth of the graph itself. On the other hand, if $(T^{uv}, (X_t)_{t \in V(T^{uv})})$ is a tree decomposition of $G^{uv}$, we can construct a tree decomposition of $G$ by replacing each occurrence of the vertex $[uv]$ in one of the bags by $u$ and $v$. The width of this tree decomposition is at most one larger that the width of $(T^{uv}, (X_t)_{t \in V(T^{uv})})$. Hence shrinking an edge can decrease the treewidth at most by one. The same is true, if we shrink all edges of a matching $M$: If we apply the operation above for each edge of the matching, each node is replaced by two at most once.
	
	If a graph $G$ has several connected components, the treewidth of $G$ is the maximal treewidth of one of its connected components, as we argue next. If $(T^i, (X^i_t)_{t \in V(T^i)} )$ are tree decompositions of the connected components, we can construct a tree decomposition $(T,(X_t)_{t \in V(T)} )$ of $G$ as follows. We start with $V(T)=\bigcup_i V(T^i)$ and $E(T) = \bigcup_i E(T^i)$, i.e.,~the union of all tree decompositions for the connected components. From this forest, we build an arbitrary tree containing it by adding appropriate further edges.
	
	Let $(T, (X_t)_{t \in V(T)})$ be a tree decomposition of a graph $G$ and $tt' \in E(T)$. If $X_t \subseteq X_{t'}$ then $(T^{tt'}, (X'_t)_{t \in V(T^{tt'})})$ with $X'_{\tilde t}=X_{\tilde t}$ for $\tilde t \in V(T^{tt'})  \setminus \{ [tt'] \}$ and $X'([tt'])=X_{t'}$ is a tree decomposition of $G$, i.e., if the bag of a node is a subset of the bag of an adjacent node, we can shrink this node into the adjacent one keeping the larger bag. If there is such a bag in a tree decomposition, we call the tree decomposition redundant and otherwise non-redundant. We argued that there is always a non-redundant tree decomposition of minimal width. The number of nodes of a non-redundant tree decomposition is at most $|V(G)|$. This can be seen by induction on the number of vertices of $G$. If $|V(G)|=1$, this is obvious. Otherwise consider a non-redundant tree decomposition $(T, (X_t)_{t \in V(T)})$ of $G$ and a leaf $t$ of $T$. Let $p(t)$ be the node adjacent to $t$ and $Y=X_t \setminus X_{p(t)}$. As the tree decomposition is non-redundant, $Y \not= \emptyset$. Furthermore, $X_{t'} \cap Y=\emptyset$ for all $t' \not= t$. If we remove $t$ from $T$, we obtain a non-redundant tree decomposition of $G[V(G) \setminus Y]$. By the induction hypothesis, it has at most $|V(G)|-|Y|$ nodes and hence $T$ has at most $|V(G)|$ nodes.
	
	If the treewidth of a graph $G$ is $\tw$, the number of edges $E(G)$ is at most $\tw \cdot |V(G)|$. This can be proven by a very similar induction. Consider again a non-redundant tree decomposition. As above, we obtain a tree decomposition of $G[V(G) \setminus Y]$ by removing a single leaf. By induction hypothesis, the number of edges of $G[V(G) \setminus Y]$ is at most $\tw \cdot (|V(G)-|Y|)$. Each vertex $v \in Y$ can only have neighbors in $X_t$ and hence, its degree is at most $|X_t|-1 \le \tw$. Hence, $G$ has at most $|Y| \cdot \tw$ edges more than $G[V \setminus Y]$.
	
	Let $\tw$ be the treewidth of a graph $G$.
	If $u$ and $v$ have more than $\tw$ common neighbors, there is at least one bag containing $u$ and $v$. Assume otherwise and let $w_1, \dots, w_{\tw +1}$ be common neighbors of $u$ and $v$. First, we can transform the tree decomposition into a non-redundant one without changing this property. There is at least one edge $tt' \in E(T)$ such that $u \notin X_{T^1}$ and $v \notin X_{T^2}$ for the subtrees $T^1$ and $T^2$ of $T$ that arise if one removes $tt'$ from $T$. As $w_i$ is adjacent to $u$, it has to be in at least one bag of $T^1$. Similarly, it has to be in at least one bag of $T^2$. Hence $w_i \in X_t$ and $w_i \in X_{t'}$ for all $1 \le i \le \tw + 1$. As $X_t \not\subseteq X_{t'}$ there is at least one further vertex in $X_t$ and hence $|X_t| \ge \tw+2$, which is a contradiction.

	Choosing an arbitrary node $r \in V(T)$ of a tree $T$ as root, we can make a rooted tree out of $T$ with natural parent-child and ancestor-descendant relations. For a node $t \not= r$ let $p(t)$ be the parent of $t$. Similarly, a tree decomposition $(T, (X_t)_{t \in T})$ can be turned into a rooted tree decomposition by choosing an arbitrary node $r \in V(T)$ as root. For a rooted tree decomposition $(T, (X_t)_{t \in V(t)})$ and $t \in V(T)$, let $X^+_t$ be the set of all vertices in descendants of $t$ and $G^+_t=G[X^+_t]$ the induced graph of these vertices.
	
	A rooted tree decomposition $(T, (X_t)_{t \in T})$ with root $r$ is called nice, if the following properties hold:
	\begin{itemize}
		\item $X_r=\emptyset$ and $X_t=\emptyset$ for all leaves $t$ of $T$, i.e.,~the bags of all leaves and the root are empty
		\item Every non-leaf node $t \in V(t)$ is of one of the following three types:
		\begin{description}
			\item[Join-node:] $t$ has exactly two children $t_1, t_2$ and $X_t=X_{t_1}=X_{t_2}$.
			\item[Introduce-node:] $t$ has exactly one child $t'$ and $X_t=X_{t'} \cup \{ v \}$ for some vertex $v \in V \setminus X_{t'}$. We say that $v$ is introduced at $t$.
			\item[Forget-node:] $t$ has exactly  one child $t'$ and $X_{t'}=X_t \cup \{ v \}$ for some vertex $v \in V \setminus X_t$. We say that $v$ is forgotten at $t$.
		\end{description}
	\end{itemize}
	
	Given a tree decomposition $(T, (X_t)_{t \in T})$ of width $\tw$ of $G$, we can compute in time $\Oh(\tw^2 (|V(T)|+|V(G)|))$ a nice tree decomposition of $G$ of width $\tw$ with at most $\Oh( \tw |V(G)|)$ nodes. This is roughly done as follows: We first make $(T, (X_t)_{t \in T})$ non-redundant and root it at an arbitrary node. Then, we replace each node $t$ with $d \ge 3$ children by a binary tree with $d$ leaves, make each child adjacent to one of the leaves, and set each bag to $X_t$. Now, let $t$ be a node and assume its parent $p(t)$ has degree two. We replace the edge $tp(t)$ by a path of nodes, first forgetting the nodes from $X_t \setminus X_{p(t)}$ one by one and then introducing the vertices $X_{p(t)} \setminus X_t$ one by one.
	
	Notice that if $G$ is connected, the leaves and the root are the only nodes with empty bags if the nice tree decomposition is constructed with the algorithm sketched above. Sometimes it is easier to start the construction with the parents of the leaves whose bags consist of a single vertex.
	
}{ 
	In this Section, we give sketch some definitions of some basic properties, that we will use later in the paper. All definitions are standard and the properties are well known (see, e.g.,~\cite{DBLP:books/sp/CyganFKLMPPS15}).

A tree decomposition $(T, (X_t)_{t \in T})$ for a graph $G$ is a tuple of a tree $T$ over some set of vertices $V(T)$ and subsets of vertices $X_t \subseteq V(G)$ of $G$, one for each vertex in $T$, such that the following three properties hold:
\begin{description}
	\item[(Node coverage)] For each $v \in V(G)$ there is at least one $t \in V(T)$ such that $v \in X_t$.
	\item[(Edge coverage)] For each $uv \in E(G)$ there is at least one $t \in V(T)$ such that $u$ and $v$ are in $X_t$.
	\item[(Coherence)] If $v \in X_{t_1}$ and $v \in X_{t_2}$ for $v \in V(G)$ and $t_1, t_2 \in V(T)$ then $v \in X_{t_3}$ for all vertices $t_3$ on the unique path in $T$ from $t_1$ to $t_2$.
\end{description}

The sets $X_t$ are called the bags.
The width of a tree decomposition $(T, (X_t)_{v \in V(T)})$ is the maximal cardinality of one of the sets $X_t$ minus one, i.e., $\max_{t \in V(T)} |X_t|-1$. The treewidth of a graph $G$ is the minimal width of a tree decomposition of $G$.

To make a clearer distinction between the vertices $V(G)$ of the graph and those of the tree of the tree decomposition, we call the elements of $V(T)$ nodes and the elements of $V(G)$ vertices in the following. 

If there are bags $t \not= t'$ in a tree decomposition with $X_t \subseteq X_{t'}$, we call the tree decomposition redundant and otherwise non-redundant. There is always a non-redundant tree decomposition of minimal width. The number of nodes of a non-redundant tree decomposition is at most $|V(G)|$. 

Choosing an arbitrary node $r \in V(T)$ of $(T,(X_t)_{t \in T})$ as root, we get a rooted tree decomposition with natural parent-child and ancestor-descendant relations. For a node $t \not= r$ let $p(t)$ be the parent of $t$. For a rooted tree decomposition $(T, (X_t)_{t \in V(t)})$ and $t \in V(T)$, let $X^+_t$ be the set of all vertices in descendants of $t$ and $G^+_t=G[X^+_t]$ the induced graph of these vertices.

A rooted tree decomposition $(T, (X_t)_{t \in T})$ with root $r$ is called nice, if the following properties hold:
\begin{itemize}
	\item $X_r=\emptyset$ and $X_t=\emptyset$ for all leaves $t$ of $T$, i.e.,~the bags of all leaves and the root are empty
	\item Every non-leaf node $t \in V(t)$ is of one of the following three types:
	\begin{description}
		\item[Join-node:] $t$ has exactly two children $t_1, t_2$ and $X_t=X_{t_1}=X_{t_2}$.
		\item[Introduce-node:] $t$ has exactly one child $t'$ and $X_t=X_{t'} \cup \{ v \}$ for some vertex $v \in V \setminus X_{t'}$. We say that $v$ is introduced at $t$.
		\item[Forget-node:] $t$ has exactly  one child $t'$ and $X_{t'}=X_t \cup \{ v \}$ for some vertex $v \in V \setminus X_t$. We say that $v$ is forgotten at $t$.
	\end{description}
\end{itemize}

Given a tree decomposition $(T, (X_t)_{t \in T})$ of width $\tw$ of $G$, we can compute in time $\Oh(\tw^2 (|V(T)|+|V(G)|))$ a nice tree decomposition of $G$ of width $\tw$ with at most $\Oh( \tw |V(G)|)$ nodes. 
}
	
	\section{Computing an Optimal Tree Decomposition from an Arbitrary One}\label{KloksAlg}
	
	We want to compute an optimal tree decomposition of a graph $G=(V,E)$. Notice that there is an optimal tree decomposition consisting of at most $n$ nodes. 
	There are only a finite number of tree topologies with at most $n$ nodes. Hence we can enumerate all such topologies and all assignments of subsets of $V$ with a size of at most $\tw$ to the nodes and check the three properties of a tree decomposition to compute the optimal tree decomposition. We now want to improve upon this simple algorithm by using a given tree decomposition of $G$.
	
	\subsection{Enumerating Tree Decompositions with a Detour}

	Let $G=(V,E)$ be the given graph and $\given\td=(\given T,(Y_{\given t})_{\given t \in V(\given T)})$ be a nice (and hence rooted) tree decomposition of $G$ of width $\given\tw$. Both the graph $G$ and the tree decomposition $\given\td$ are given to the algorithm and we fix this notation for the remainder of this section. We assume that $G$ is connected and hence edge coverage implies node coverage. 
	
	\begin{figure}
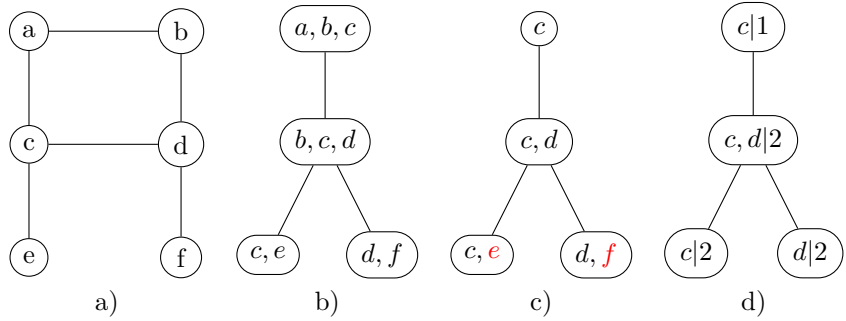

		\begin{center}
			\scalebox{\figscale}{
		\ls{\begin{tabular}{cccc}
			\graphg & 
			\tdg &
			\tdr &
			\tdrb \\
			a) & b) & c) & d)
		\end{tabular}
	}{
		\begin{tabular}{ccccc}
		\graphg & 
		\tdg &
		\tdr &
		\ceinss &
		\czweis \\
		a) & b) & c) & d) & e)
	\end{tabular}
	}
	}
		\end{center}
	\ls{}{\figspace}
		\caption{Consider the graph on the left, (a). (b) shows an optimal tree decomposition for it. If we remove the vertices $a$ and $b$ from all bags as shown in (c), we obtain a tree decomposition for $G[\{c, d, e, f \}]$. Assume that the given tree decomposition has a node $\given t$ with $X_{\given t}=\{c,d\}$ and $X^+_{\given t}=\{c, d, e, f\}$. Then the tree decompositions enumerated for $\given t$ already covered all edges incident to $e$ and $f$ and hence, it does not matter whether a bag contains these vertices, only the current size of the bag matters. Formally, this is captured by the restricted bags depicted in (d)\ls{.}{ ignoring the green part. We can assume that no node is added to the bottom-right leaf in (d) later in the construction, as its restricted bag is contained in the restricted bag of its parent and contains additional nodes. This is as we could add an additional leave only containing the nodes of the restricted bag and the additional nodes as depicted in (e).}  }\label{treedecomposition}
	\end{figure}
		
	In order to reduce the number of enumerated tree decompositions, we want to make use of the given (non-optimal) tree decomposition. Notice that given a tree decomposition $\td$ for $G$, we obtain a tree decomposition for $G^+_{\given t}$ if we intersect all bags of $\td$ with $Y^+_{\given t}$\ls{( see Figure \ref{treedecomposition}, part a), b) and c))}{}. For all $\given t \in V(\given T)$, we enumerate all tree decompositions of $G^+_{\given t}$ of width at most $\given\tw$ with at most $n$ nodes bottom up, denoted as $\TD_{\given t}$. Hence, for the root $\given r$ of the given tree decomposition, we enumerate all tree decompositions of width at most $\given\tw$ of $G$ and the problem is solved. Moreover, it is easy to enumerate $\TD_{\given t}$ from its children as follows:

\begin{construction}(Base Case)
\begin{description}
	\item[Leaves:] For a leaf $\given t$ of $\given \td$, the set $\TD_{\given t}$ contains all tree decompositions $(T, (\emptyset)_{t \in V(T)})$ for an arbitrary tree $T$ of at most $n$ nodes.
	\item[Join-Node:] Consider an arbitrary tree decomposition $\td=(T, (X_t)_{t \in V(T)})$ in $\TD_{\given t}$ for a join-node $\given t$ with children $\given t^1$ and $\given t^2$. Notice that $\td^i=(T, (X_t \cap Y^+_{\given t^i})_{t \in V(t)}) \in \TD_{\given t^i}$ for $i\in \{1,2\}$ hence enumerated in $\TD_{\given t^i}$. 
	Furthermore the tree $T$ and $X_t \cap Y_{\given t}$ for all $t \in T$ are the same on all three tree decompositions. Therefore all tree decompositions in $\TD_{\given t}$ can be constructed by choosing tree decompositions $\td^1=(T^1, (X^1_t)_{t \in V(T^1)}) \in \TD_{\given t^1}$ and $\td^2=(T^2, (X^2_t)_{t \in V(T^2)}) \in \TD_{\given t^2}$ with the same tree, i.e.,~$T^1=T^2$ and such that $X^1_t \cap Y_{\given t} = X^2_t \cap Y_{\given t}$ for all $t \in T^1$ and constructing $\td=(T^1, (X^1_t \cup X^2_t)_{t \in V(T^1)})$. For a join-node $\given t$, we define $J_{\given t}: \TD_{\given t^1} \times \TD_{\given t^2} \rightarrow 2^{\TD_{\given t}}$ by $J_{\given t}(\td^1, \td^2)= \{ \td \}$ for $\td^1, \td^2$ and $\td$ as above, i.e.,~$J_{\given t}(\td^1, \td^2)$ is the set containing the tree decomposition constructed from $\td^1$ and $\td^2$ as explained above as the single element. $J_{\given t}$ maps to the power set of all tree decompositions as this is consistent with the corresponding definition for introduce- and forget-nodes.
	\item[Introduce-Node:] Consider an arbitrary tree decomposition $\td=(T, (X_t)_{t \in V(T)})$ in $\TD_{\given t}$ for an introduce-node $\given t$ with child $\given t'$ for which the vertex $v \in V$ is introduced. Notice that $\td'=(T, (X_t \setminus \{v\})_{t \in V(T)})$ is a tree decomposition in $\TD_{\given t'}$. Furthermore the nodes $t$ of $V(T)$ such that the bag $X_t$ contains $v$ form a subtree of $T$.  For each edge $vw \in E(G^+_{\given t})$ adjacent to $v$, there is at least one bag containing $v$ and $w$ in $T$. Hence, we can enumerate all tree decompositions in $\TD_{\given t}$ as follows: We choose a tree decomposition $\td'$ in $\TD_{\given t'}$ and a subtree of the tree of $\td'$ with the properties mentioned earlier. For these choices, we construct the tree decomposition in $\TD_{\given t}$ from $\td'$ by adding $v$ into the bags of the selected subtree. We refer to the set that contains all these tree decompositions as $I_{\given t}(\td')$.
	\item[Forget-Node:] Consider an arbitrary tree decomposition $\td=(T, (X_t)_{t \in V(T)})$ in $\TD_{\given t}$ for a forget-node $\given t$ with child $\given t'$ for which the vertex $v \in V$ is forgotten. Notice that $\td$ is also a tree decomposition in $\TD_{\given t'}$ (as $G^+_{\given t}=G^+_{\given t'}$) and hence the tree decomposition in $\TD_{\given t}$ are the same as the ones in $\TD_{\given t'}$. We define $F_{\given t}:\TD_{\given t'} \rightarrow 2^{\TD_{\given t}}$ by $F_{\given t}(\td)= \{\td \}$.
\end{description}
\end{construction}

	We do not argue about the running time to enumerate these tree decompositions (since we will be computing something different anyway), but we notice that the number of enumerated tree decompositions is at most $T_n \cdot n^{n \cdot \given \tw}$, where $T_n$ is the number of tree topologies with at most $n$ nodes.
	
	This estimate of the number of enumerated tree decompositions is a function in $n$ and $\given\tw$ and not only in $\given\tw$. We now reduce the number of enumerated tree decompositions to make it a function in $\given \tw$ only. We show that it suffices to store some limited information for a tree decomposition $\td$ that is enumerated at a node $\given t$ of the given tree decomposition $\given \td$: We store only a part of the tree of $\td$ and only those vertices in the bags of $\td$ that are in $Y_{\given t}$ of $\given \td$. Hence, several tree decompositions have the same limited information. We write $\td^1 \equiv^{\given t}_s \td^2$ if two tree decompositions have the same limited information and call them equivalent. Before defining $\equiv^{\given t}_s$, we will define finer classes $\equiv^{\given t}_b$ and $\equiv^{\given t}_c$, i.e., we iteratively show that we can ignore some information in the construction of the tree decompositions.
	
	The classes will be such that equivalent tree decompositions have the same treewidth and if one can use one tree decomposition to construct a certain tree decomposition in the parent node of the given tree decomposition, we can use each equivalently to construct an equivalent one. In order do compensate for the removal of nodes of the tree, we allow to add additional leaves whose bags are subsets of the parental bags and copy certain nodes of degree two. The set of all tree decompositions that can be constructed from the limited information of $\td$ is denoted by $\E^{\given t}(\L^{\given t}(\td))$.  More formally, the following will hold for the equivalence class $\equiv^{\given t}_s$ and similarly for the classes $\equiv^{\given t}_b$ without possible modifications of the tree decomposition $\td$ and $\equiv^{\given t}_c$ by allowing to add additional leaves to $\td$.
	
	\begin{lemma}\label{cancompress} \
	\begin{description}
		\item[Join-node:] Let $\given t$ be a join-node with child nodes $\given t^1$ and $\given t^2$, $A, A' \in \TD_{\given t^1}$ with $A \equiv^{\given t^1}_{b} A'$ and $B, B' \in \TD_{\given t^2}$ with $B \equiv^{\given t^2}_{b} B'$. If $C \in J_{\given t}(E^{\given t}(\L^{\given t}(A)),E^{\given t}(\L^{\given t}(B)))$ then there is $C' \in J_{\given t}(E^{\given t}(\L^{\given t}(A')),E^{\given t}(\L^{\given t}(B')))$ with $C' \equiv^{\given t}_{b} C$.
		\item[Introduce-node:] Let $\given t$ be an introduce-node with child node $\given t'$ and $A, A' \in \TD_{\given t'}$ with $A \equiv^{\given t'}_{b} A'$. If $C \in I_{\given t}(E^{\given t}(\L^{\given t}(A)))$ then there is $C' \in I_{\given t}(E^{\given t}(\L^{\given t}(A')))$ with $C' \equiv^{\given t}_{b} C$.
		\item[Forget-node:] Let $\given t$ be a forget-node with child node $\given t'$ and $A, A' \in \TD_{\given t'}$ with $A \equiv^{\given t'}_{b} A'$. If $C \in F_{\given t}(E^{\given t}(\L^{\given t}(A)))$ then there is $C' \in F_{\given t}(E^{\given t}(\L^{\given t}(A')))$ with $C' \equiv^{\given t}_{b} C$.
	\end{description}
	\end{lemma}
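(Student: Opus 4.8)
The plan is to prove the three cases by the same transport argument. Given $C$, we are handed a witness for membership: an expansion $\hat A\in\E^{\given t}(\L^{\given t}(A))$ (and, for a join, also $\hat B\in\E^{\given t}(\L^{\given t}(B))$) together with the choices the relevant construction operator makes to produce $C$. The idea is to carry this witness over to $A'$ (and $B'$), to check it is still admissible there, and to verify that the tree decomposition $C'$ it produces has the same label as $C$, so $C'\equiv^{\given t}_b C$. The forget case is immediate: since $v$ is forgotten at $\given t$ we have $G^+_{\given t}=G^+_{\given t'}$, $Y_{\given t}=Y_{\given t'}\setminus\{v\}$ and $F_{\given t}$ is the identity, so $\L^{\given t}$ is obtained from $\L^{\given t'}$ simply by deleting $v$ from the recorded bags; hence $A\equiv^{\given t'}_b A'$ already gives $\L^{\given t}(A)=\L^{\given t}(A')$, so $\E^{\given t}(\L^{\given t}(A))=\E^{\given t}(\L^{\given t}(A'))$ and one may take $C'=C$.

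For the introduce case, $I_{\given t}$ picks inside the tree $T$ of an expansion a connected subtree $S$ that meets a bag containing $w$ for every neighbour $w$ of $v$ in $G^+_{\given t}$, and inserts $v$ into the bags of $S$. The first step I would carry out is to show that \emph{every neighbour $w$ of $v$ in $G^+_{\given t}$ already lies in $Y_{\given t}$}. Indeed, the edge $vw$ is covered by a bag $Y_{\given s}$ of $\given\td$ with $v,w\in Y_{\given s}$; since $v$ is introduced at $\given t$, the nodes of $\given\td$ whose bag contains $v$ form a subtree containing $\given t$ and no proper descendant of $\given t$, so $\given s$ is not a proper descendant of $\given t$; on the other hand $w\in X^+_{\given t}$ occurs in some bag in the subtree of $\given\td$ rooted at $\given t$, and the path between that node and $\given s$ passes through $\given t$, so coherence forces $w\in Y_{\given t}$ (and $w\neq v$, so $w\in Y_{\given t'}$). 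Consequently, for each relevant $w$, whether a node $t$ of $\hat A$ has $w$ in its bag is read off the recorded restricted bag of $t$; so the family of admissible subtrees $S$, the width condition on them, and the resulting label all depend only on $\L^{\given t}(A)$. Since $\L^{\given t}(A)=\L^{\given t}(A')$, the witness $(\hat A,S)$ is admissible over $A'$ as well and yields the required $C'$.

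For the join case I would first record two standard consequences of coherence in $\given\td$: $X^+_{\given t^1}\cap X^+_{\given t^2}=Y_{\given t}$, and no edge of $G^+_{\given t}$ joins $X^+_{\given t^1}\setminus Y_{\given t}$ to $X^+_{\given t^2}\setminus Y_{\given t}$ (a vertex in either set occurs only strictly below one child). Now $J_{\given t}$ glues expansions $\hat A$, $\hat B$ over the common tree $T$ under the requirement $X^{\hat A}_t\cap Y_{\given t}=X^{\hat B}_t\cap Y_{\given t}$ for all $t$, which by the first fact is just equality of the recorded restricted bags of $A$ and of $B$ --- a condition on labels only; and, again by the first fact, $X^{\hat A}_t\cap X^{\hat B}_t\subseteq Y_{\given t}$, so $|X^{\hat A}_t\cup X^{\hat B}_t|=|X^{\hat A}_t|+|X^{\hat B}_t|-|X^{\hat A}_t\cap Y_{\given t}|$ and the restricted bag of the glued node equals $X^{\hat A}_t\cap Y_{\given t}$. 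Hence the tree, restricted bags and bag sizes of $C$, together with the edge and coherence conditions certifying that $C$ is a tree decomposition of $G^+_{\given t}$ of width at most $\given\tw$ (the second fact disposing of the ``crossing'' edges), are all functions of $\L^{\given t}(A)$, $\L^{\given t}(B)$ and their compatibility pattern. Keeping the same underlying tree and the same node pairing when replacing $A,B$ by $A',B'$ therefore yields a $C'\in J_{\given t}(\E^{\given t}(\L^{\given t}(A')),\E^{\given t}(\L^{\given t}(B')))$ with $C'\equiv^{\given t}_b C$.

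The step I expect to be the main obstacle is the introduce case: one must pin down exactly which features of the child decomposition $I_{\given t}$ inspects and confirm that every one of them is stored in $\L^{\given t}$, and the ``every neighbour of $v$ lies in $Y_{\given t}$'' observation is what makes the choice of $S$ label-determined. The join case then needs only the analogous accounting plus the separator identity, and the forget case is trivial. Beyond this, the argument is a matter of unwinding the (not yet stated) definitions of $\L^{\given t}$, $\E^{\given t}$ and $\equiv^{\given t}_b$; I would expect the very same transport argument to recur, with the extra chore of matching up the additional leaves and the copied degree-two nodes that their expansions are allowed to introduce, for the coarser relations $\equiv^{\given t}_c$ and $\equiv^{\given t}_s$.
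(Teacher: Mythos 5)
Your proposal is correct and takes essentially the same route as the paper: the paper's proof merely observes that the limited information (restricted-bag representation) of anything produced by $J_{\given t}$, $I_{\given t}$, $F_{\given t}$ is computable from the limited information of the inputs plus the construction choices, which is exactly your transport argument, and the two facts you verify (every neighbour of the introduced vertex lies in $Y_{\given t}$; vertices common to both children's subtrees lie in $Y_{\given t}$, so the joined bag size is $|X^A_t|+|X^B_t|-|X^A_t\cap Y_{\given t}|$) are precisely the ones the paper invokes in its subsequent ``Restricted Bags'' construction. One small nit: in the forget case you cannot literally take $C'=C$, since bag-equivalent $A,A'$ may differ on already-forgotten vertices and hence $\L^{\given t}(A)\neq\L^{\given t}(A')$ in general; as in your other two cases, you should replay the same witness on $A'$ and conclude $C'\equiv^{\given t}_b C$ from the fact that the labels only change by deleting the forgotten vertex.
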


	\begin{proof}
	As we will see, the limited information of the tree decompositions in $J_{\given t}(E^{\given t}(\L^{\given t}(A)),E^{\given t}(\L^{\given t}(B)))$, $I_{\given t}(E^{\given t}(\L^{\given t}(A)))$ and  $F_{\given t}(E^{\given t}(\L^{\given t}(A)))$ can be computed from the limited information of $A$ (and $B$). Therefore, if we replace $A$ by $A'$ (and $B$ by $B'$) in the construction, the construction will lead to tree decompositions with the same limited information.
	\end{proof}
		
	\subsection{Equivalence classes on bags}
	
	
	Given a tree decomposition $\td=(T, \{X_t\}_{t \in V(T^1)})$, we define the restricted bags with respect to $\given t$ to be the tuples $(X_t \cap Y_{\given t}, |X_t|)_{t \in V(T)}$, i.e.,~the vertices of the bag are restricted to the vertices of the current bag of the given tree decomposition and we store the sizes of the bags\ls{ (see Figure \ref{treedecomposition}, part d))}{}. The restricted bag representation of a tree decomposition $\td=(T,(X_t)_{t \in V(T)})$ consists of the tree $T$ and of the restricted bags for all $t \in V(T)$.
	We call tree decompositions $\td^1=(T^1, \{X^1_t\}_{t \in V(T^1)})$ and $\td^2=(T^2, \{X^2_t\}_{t \in V(T^2)})$ in $\TD_{\given t}$ bag-equivalent, if their restricted bag representations are the same and write $\td^1 \equiv^{\given t}_{b} \td^2$ in his case. \ls{We will now show how we can compute the restricted bag representation of a constructed tree decomposition directly from the restricted bag representation of the tree decompositions used in the construction. We will then conclude that it suffices to enumerate one tree decomposition per equivalence class.}{}

	\ls{
	\begin{construction} (Restricted Bags)
		\begin{description}
		\item[Join-node:]
		Let $A=(T^A, (X^A_t)_{t \in V(T^A)}) \in \td^1$ and $B=(T^B, (X^B_t)_{t \in V(T^B)}) \in \td^2$ tree decompositions with $T^A=T^B$ and $X^A_t \cap Y_{\given t}=X^B_t \cap Y_{\given t}$ for all $t \in V(T^A)$. The (unique) tree decomposition $C \in J(A,B)$ has again the same tree and the bags are just the union of the corresponding bags in $A$ and $B$. The restricted bag of a node $t$ of $C$ is $(X^A_t \cup X^B_t) \cap Y_{\given t}=(X^A_t \cap Y_{\given t}) \cup (X^B_t \cap Y_{\given t}) =X^A_t \cap Y_{\given t}$ and its size is the sum of the sizes of the restricted bags of $t$ in $A$ and $B$ minus $|X^A_t \cap Y_{\given t}|$ (as $X^A_t \cap Y_{\given t}$ is the set of nodes shared by $X^A_t$ and $X^B_t$). Notice that the set of vertices of the restricted bags of $C$ and its sizes can be computed directly from the restricted bags of $A$ and $B$. 
		\item[Introduce-node:] Assume $C$ is constructed from $A$ if the introduced vertex $v$ is added to the bags of the subtree $T$ of the tree of $A$. 
		$T$ can be used, if all edges adjacent to $v$ of $G^+_{\given t}$ are covered. As the other endpoints of these edges have to be in $Y_{\given t}$, we can decide whether $T$ is valid from the restricted bags. The restricted bags of $C$ are the same as those of $A$, but $v$ is added to the bags of the nodes in $T$. The sizes of the bags of nodes in $T$ increase by one, and the other sizes do not change. Hence, we can compute the restricted bag representation of $C$ directly from the restricted bag representation of $A$.
		\item[Forget-node:] As the unique tree decomposition $C \in F(A)$ is $A$, the restricted bag representation of $C$ is the restricted bag representation of $A$ except that we have to delete the vertex that is forgotten from each restricted bag.
	\end{description}	
	\end{construction}
}{We will not show the construction of $J^{\given t}$, $I^{\given t}$ and $F^{\given t}$ but give the main points. For a join-node, the vertices of the restricted bags have to be the same and the size can easily be computed out of the sizes of the children. For an introduce-node, we note that the validity of a chosen subtree for the new vertex only depends on the vertices that are contained in the restricted bags of the given tree decomposition. For a forget-node, we simply remove the forgotten vertex from all restricted bags.}

	\ls{Notice that the restricted bag representation of $C \in J_{\given t}(A,B) \cup I_{\given t}(A) \cup F_{\given t}(A)$ can be directly computed from the restricted bag representation of $A$ (and $B$) and hence Lemma \ref{cancompress} holds for $\equiv^{\given t}_b$.}{}
	
	
	\subsection{The Core of a tree decomposition}
	
	In this section, we will use the following fact about tree decompositions: If $t$ is a node with bag $X_t$, we can add a new node $t'$, the edge $tt'$ and chose $X_{t'} \subseteq X_t$ to obtain an other tree decomposition of the same width. We call this an addition of a leaf to the tree decomposition.
	
	In the following, we will show that we can assume some structure of the constructed optimal tree decomposition which depends on the given tree decomposition $\given\td=(\given T, (Y_{\given t})_{\given t \in V(\given T)})$. 
	
	\begin{lemma}\label{core}
	Assume $t$ is a leaf with adjacent node $p(t)$ in the constructed tree decomposition $\td=(T, (X_t)_{t \in V(t)})$ such that $X_t \setminus Y_{\given t} \not= \emptyset$ and $X_t \cap Y_{\given t} \subseteq X_{p(t)} \cap Y_{\given t}$ for some node $\given t \in V(\given T)$. Then there is an optimal tree decomposition such that no node from $V \setminus Y^+_{\given t}$ is contained in $X_t$.		
	\end{lemma}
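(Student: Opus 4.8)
The plan is to show that if $X_t$ contains a vertex outside $Y^+_{\given t}$, then one can split the leaf $t$ into two leaves without enlarging any bag: $t$ retains only the vertices of $X_t$ lying in $Y^+_{\given t}$, and a new leaf hanging off $p(t)$ absorbs the rest. Write $Z:=X_t\setminus Y^+_{\given t}$ and $W:=X_t\setminus X_{p(t)}$. If $Z=\emptyset$ there is nothing to do, so assume $Z\neq\emptyset$. First I would note the standard fact that, because $t$ is a leaf with unique neighbour $p(t)$, coherence of $\td$ forces every vertex of $W$ to occur in $X_t$ and in no other bag; in particular all $G$-neighbours of a vertex of $W$ already lie in $X_t$.

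The key step is a structural observation tying $X_t$ to the given tree decomposition: \emph{there is no edge of $G$ between $W\cap Y^+_{\given t}$ and $V\setminus Y^+_{\given t}$}. I would prove this by contradiction. If $uv\in E(G)$ with $v\in W\cap Y^+_{\given t}$ and $u\notin Y^+_{\given t}$, then some bag $Y_{\given s}$ of $\given\td$ contains $\{u,v\}$; since $u\notin Y^+_{\given t}$, the node $\given s$ lies outside the subtree of $\given T$ rooted at $\given t$, while $v\in Y^+_{\given t}$ means $v$ also occurs in that subtree. Coherence of $\given\td$ then puts $v\in Y_{\given t}$, hence $v\in X_t\cap Y_{\given t}\subseteq X_{p(t)}\cap Y_{\given t}$ by hypothesis, contradicting $v\in W$. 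In particular there is no $G$-edge between $W\cap Y^+_{\given t}$ and $Z$. This is the one place where the hypothesis $X_t\cap Y_{\given t}\subseteq X_{p(t)}\cap Y_{\given t}$ is genuinely used.

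Next I would perform the surgery. Let $\td'$ be obtained from $\td$ by replacing $X_t$ with $X'_t:=X_t\cap Y^+_{\given t}$ and adding a new leaf $t^*$ adjacent to $p(t)$ with $X_{t^*}:=(X_t\cap X_{p(t)})\cup Z$, all other bags unchanged. Since $X'_t\subseteq X_t$ and $X_{t^*}\subseteq X_t$, no bag of $\td'$ is larger than the largest bag of $\td$, so $\td'$ is optimal as soon as it is shown to be a tree decomposition of $G$; and then $t$ is still a leaf of $\td'$ whose bag $X'_t$ avoids $V\setminus Y^+_{\given t}$, which is the claim. It remains to check the axioms for $\td'$. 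Node coverage is clear: each vertex of $Z$ is either already in $X_{p(t)}$ or in $W\cap Z\subseteq X_{t^*}$. For coherence, the set of bags containing a given vertex changes only near $t$, and in each case (the vertex staying at $t$, moving to $t^*$, or lying in both) the resulting node set is connected because $t^*$ is adjacent to $p(t)$ — the only subtle case is $v\in W$, where $v$ occurs in $X_t$ alone, but then the partition $W=(W\cap Y^+_{\given t})\cup(W\cap Z)$ sends $v$ to exactly one of $t,t^*$, consistently with the fact that a vertex of $W$ cannot sit in two distinct children of $p(t)$. For edge coverage, an edge can become uncovered only if both endpoints lie in $X_t$ and at least one lies in $Z$; using the structural observation to exclude edges between $W\cap Y^+_{\given t}$ and $Z$, a short case distinction shows that such an edge has either both endpoints in $X_{p(t)}$ or both endpoints in $X_{t^*}$, hence stays covered.

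The hard part will be making the edge-coverage and coherence case distinctions fully airtight; but everything reduces to the single structural observation of the second paragraph together with the elementary fact that the private vertices $W$ of a leaf occur nowhere else, and the remainder is just bookkeeping about which of the two leaves each affected vertex is assigned to.
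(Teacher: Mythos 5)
Your proof is correct and takes essentially the same route as the paper: remove $Z$ from the leaf's bag and attach a new leaf to $p(t)$ containing $Z$ together with the part of $X_t$ that the hypothesis $X_t\cap Y_{\given t}\subseteq X_{p(t)}\cap Y_{\given t}$ places inside $X_{p(t)}$, observing that no bag grows and all axioms survive. The only difference is presentational: you spell out, via the separator property of $\given\td$, why no edge can join the private vertices of $X_t\cap Y^+_{\given t}$ to $Z$ — a point the paper's proof leaves implicit in the phrase that all edges remain covered.
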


	\begin{proof}	
 	Let the notation be given as stated in the lemma. Assume $Z:= X_t \cap (V \setminus Y^+_{\given t})$ is non-empty. We can remove all vertices in $Z$ from $X_t$ and create a new child of $p(t)$ with bag $Z \cup (X_t \cap Y^+_{\given t})$ to get another tree decomposition\ls{ (see Figure \ref{lemmacore} for an illustration)}{}. This transformation cannot increase with the width of the tree decomposition and all edges are still covered. 
 	\end{proof}
	
	\ls{
	\begin{figure}
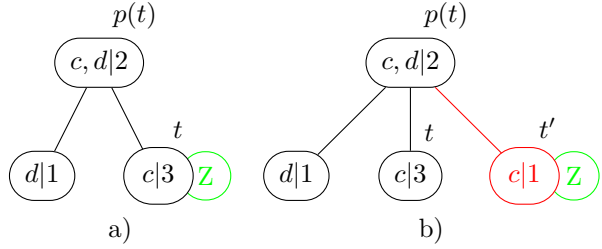

		\begin{center}
			\scalebox{\figscale}{
		\begin{tabular}{cc}
			\ceins & \czwei\\
			a) & b)
		\end{tabular}}
		\end{center}
	\ls{}{\figspace}
		\caption{Assume the leaf $t$ of the tree decomposition shown in (a) contains only vertices in the restricted bag that are also contained in that of its adjacent node $p(t)$ ($\{ c \}$ in this case), but $X_t$ contains further vertices from $Y^+_{\given t} \setminus Y_{\given t}$, i.e., vertices that are already forgotten. In the proof of Lemma \ref{core}, we showed that we may assume that $X_t$ does not get vertices from $V \setminus Y^+_{\given t}$ (the green nodes $Z$ in the figure), as we could create another node $t'$ (shown in red in (b)) that can get these additional nodes and the nodes in the restricted bag. Both nodes $t$ and $t'$ now contain less vertices as the node $t$ in (a)). }\label{lemmacore}
	\end{figure}
}{}
	
	Informally, a leaf with a bag whose vertices are a subset of the vertices of its parent gets only  additional vertices later on in the construction if it does not contain vertices that are already forgotten. Notice that if a node becomes a leaf by the removal of a node, the rule can be applied to this node too.
	
	
	
	Given a tree decomposition $\td=(T, \{X_t\}_{t \in V(T)}) \in \TD_{\given t}$ we define its $\given t$-core as follows: As long as there is a leaf $t$ in $T$ such that $X_t \cap Y_{\given t} \subseteq X_{p(t)} \cap Y_{\given t}$ remove the leaf from $t$ where $p(t)$ is the node adjacent to $t$ (see Figure \ref{figurecore}, part a) and b)). Since it is possible to remove the largest bag in this way, we store the size of the largest bag with the core. We call two tree decompositions $\td^1$ and $\td^2$ ${\given t}$-core-equivalent if their cores are identical (including the restricted bags of the nodes that were not removed from the core and the size of the largest bag), denoted by $\td^1 \equiv^{\given t}_{c} \td^2$. In the following, it is easier if we can assume that $Y_{\given t} \not= \emptyset$ and hence, we start the construction with the parents $p(\given t)$ of the leaves $\given t$. Their cores consist of a single node whose bag contains the single vertex in $Y_{p(\given t)}$.

	In order to compensate this removal of bags, we have to allow the adding of additional leaves in the construction. By Lemma \ref{core}, we can assume that the bag of a leaf connected to a node $t$ that contains a node from $V \setminus Y^+_{\given t}$ does not contain nodes from $Y^+_{\given t} \setminus X_t$. For a tree decomposition $\td$ let $\L^{\given t}(\td)$ be the (infinite) set of all tree decompositions that can be obtained by iteratively adding leaves to nodes of the $\given t$-core of $\td$, whose bags are subsets of $Y_{\given t}$. $\L^{\given t}$ depends on the node $\given t$ of the given tree decomposition as we have to start the addition on a node of the core. 
	Notice that all elements of $\L^{\given t}(\td)$ have the same core as $\td$.
	\ls{Let $J^{\L}_{\given t}(\td^1, \td^2)=J_{\given t}(\L^{\given t}(\td^1), \L^{\given t}(\td^2))$, $I^{\L}_{\given t}(\td)=I_{\given t}(\L^{\given t}(\td))$ and $F^{\L}_{\given t}(\td)=F_{\given t}(\L^{\given t}(\td))$.}{}
	
	Notice that each leaf of the core has a unique vertex as otherwise each vertex of the leaf would also be contained in its adjacent node and hence the leaf can be removed. Hence, the number of leaves is at most $\given\tw$. Furthermore, the number of nodes of the core of degree at least three is at most $\given\tw$ as the tree has at most $\given\tw$ leaves. In the following, we give up assumption that the constructed tree decomposition has at most $n$ nodes, but we assume that each path of nodes of degree $2$ in the core has a length of at most $n$.

	\begin{figure}
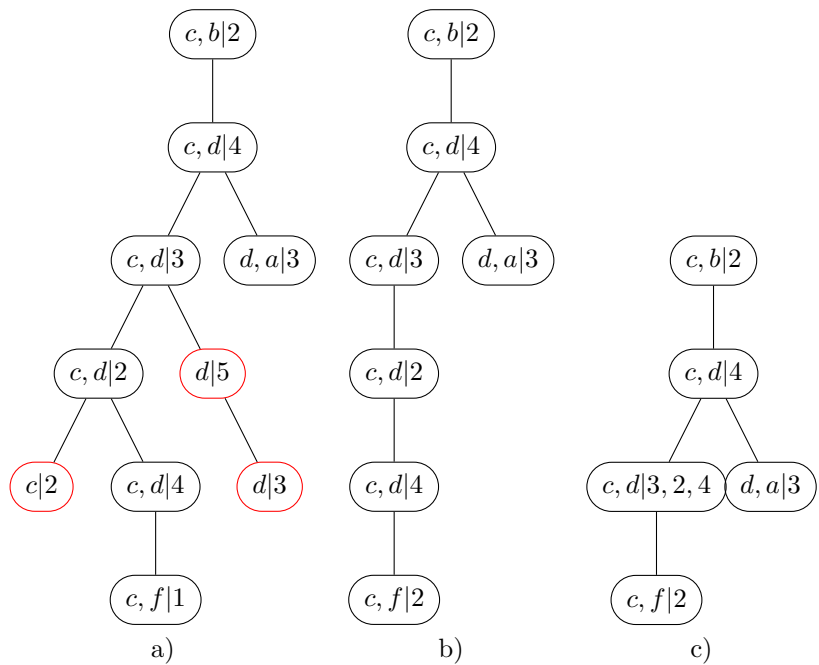

		\begin{center}
			\scalebox{\figscale}{
		\begin{tabular}{ccc}
		\tdzweirb & \tdzweic & \tdzweicc \\
			a) & b) & c)
		\end{tabular}}
		\end{center}
		\ls{}{\figspace}
		\caption{Assume we consider a tree with its restricted bags as shown in (a). The nodes shown in red are removed from the core (depicted in (b)) as they are leaves and their restricted bags are subsets of the bags of their adjacent nodes. Notice that we remove the largest bag and hence we store its size with the core. In (c) we show the compact representation of the core, i.e.,~the path of nodes of degree $2$ having equal vertices in the restricted bag are replaced by a single node and the sizes of the original nodes are represented as an integer seqeunce in the replacement.}\label{figurecore}
	\end{figure}
	
	\ls{
	Before we show that only one tree decomposition per equivalence class suffices, we will show how we have to adopt the construction to compute $J^\L_{\given t}, I^\L_{\given t}$ and $F^\L_{\given t}$.}{}
	
	\ls{
	\begin{construction}(Construction for all tree decompositions in $\L^{\given t}(\td)$)
		\begin{description}
			\item[Join-node:] Let $A \in \TD_{\given t^1}$ and $B \in \TD_{\given t^2}$. Notice that there are $A^{\L} \in \L^{\given t}(A)$ and $B^{\L} \in \L^{\given t}(B)$ such that their trees are the same (and hence $J(A^{\L}, B^{\L})$ is non-empty) if and only if their cores are the same. Furthermore, the core of the constructed tree decomposition $C$, i.e.,~the unique element of $J(A^{\L},B^{\L})$, has the same core as $A$ and $B$. The size of the largest bag of $C$ is the maximum of the largest bag in the core, the maximum in $A$, and the maximum in $B$. Hence all such $J(\L^{\given t}(A), \L^{\given t}(B))$ are core-equivalent and can be constructed directly from the cores of $A$ and $B$.
			
			\item[Introduce-node:] Let $A \in \TD_{\given t'}$ and consider an $A^{\L} \in \L^{\given t}(A)$ which is used when introducing the vertex $v$ and let $C$ be the constructed tree decomposition. If $v$ is contained in at least one bag of the core of $A^{\L}$, the core of $C$ is exactly the same as the one of $A$ (as we can apply the rule to a node before the addition of $v$ if and only if we can apply it after the addition of $v$). This is illustrated in Figure \ref{figurecore}, part a). Hence, the core of $C$ can be constructed by selecting a subtree of the core of $A$ and adding $v$ to the bags of the nodes of the subtree.
			
			Otherwise, a new path is added to the core of $A$ at an arbitrary node $t$ with bags that are subsets of $X_t$ such that the path is consistent and the bag of the last node of the path contains $v$ (see Figure \ref{figurecore}, part b)). Hence, for the construction, we have to consider all nodes $t$ of the core and add an arbitrary path (of length at most $n$) to $t$. Then we assign a nested sequence of subsets of $X_t$ to the nodes of the path and finally add $v$ to the bag of the last node of the path.
			
			\item[Forget-node:] For each $A \in \TD_{\given t'}$ and $A^\L \in \L^{\given t}(A)$, the single element of $F_{\given t}(A^L)$ is $A^L$. Its $\given t$-core can be obtained by starting with its $\given t'$-core (which is the $\given t'$-core of $A$) and potentially removing further leaves since the bags are now intersected with a smaller set.
		\end{description}
	\end{construction}
}{We will not show the construction for $J^{\given t}(\L^{\given t}(A), \L^{\given t}(B)), I^{\given t}(\L^{\given t}(A))$ and $F^{\given t}(\L^{\given t}(A))$, but give the main points. For a join-node, we note that if we can join two tree decompositions, their cores must be the same. For an introduce-node we have to distinguish two cases. If the vertex is added to at least one bag of the core, the core does not change. Otherwise the new vertex will be in exactly one bag of the new core, which is connected to the old core by a single path which we removed then constructing the core. Hence, we have to try all possibilities of adding path of nested subsets of length at most $n$. For a forget-node, we potentially have to remove further nodes from the core.}

\ls{
	\begin{figure}
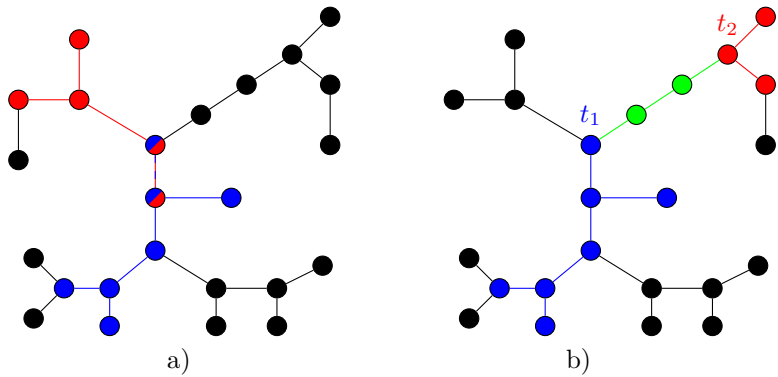

		\begin{center}
			\scalebox{\figscale}{
		\begin{tabular}{cc}
			\topologiezwei \qquad &  \qquad \topologie \\
			a) & b)
		\end{tabular}}
	\end{center}
	\ls{}{\figspace}
	\caption{The underlying tree is shown in black, its core before the vertex was introduced (the old core) is shown in blue and the vertex that is inserted into the bags of the nodes is shown in red. In part a), the vertex is inserted in at least one node of the old core. All black nodes and all red nodes that were not in the old core can still be removed and hence the new core is the old one. In part b), the vertex is inserted only into bags of nodes that were not in the old core. Let $t_1$ be the last blue node and $t_2$ the first red node on the path from the blue tree to the red tree, shown in green. All red nodes but $t_2$, and all black nodes but those on the path from $t_1$ to $t_2$ can be removed and hence are not in the new core. Hence, the new core consists of the blue nodes, the green nodes and $t_2$. The bags of the green nodes and the bag of $t_2$ without the introduced vertex are a sequence of nested subsets of the bag of $t_1$. }\label{introcore}
	\end{figure}
	
}{}

	\ls{Notice that the core of $C$ can be directly computed from the core of $A$ (and $B$) and hence Lemma \ref{cancompress} holds for $\equiv^{\given t}_c$.}{}
	
	\subsection{The compressed core of a tree decomposition}
	
	In the following, it is easier to consider a different representation of the cores. We replace each maximal path of nodes in the tree of degree $2$ such that the vertices in the restricted bags are the same by a single node, and assign the sequence of integers of the sizes of the bags of the path with the new node (see Figure \ref{figurecore}, part c)). We call this the compact representation of the core. Notice that in this representation, the number of nodes becomes bounded in $\given\tw$ since between two nodes $t$ and $t'$ of degree at least three, we have at most $|X_t \setminus X_{t'}|+|X_{t'} \setminus X_t| \le 2\given\tw$ nodes of degree two. On the other hand, each node has an integer sequence assigned to it of length up to $n$. 
	We define equivalence classes on integer sequences such that the number of equivalence classes becomes bounded in $\given\tw$. We call two tree decompositions $\td^1$ and $\td^2$ sequence-equivalent, if the compact representations of their cores are the same and each pair of integer sequences assigned to a node in the compact representation of the core are equivalent, denoted as $\td^1 \equiv^{\given t}_s \td^2$.

	In this section, we use the following fact about tree decompositions: Assume that $\td=(T, (X_t)_{t \in V(T)})$ is a tree decomposition of $G$ and $tt'$ is an edge of $T$. If we add a bag $t''$ with $X_{t''}=X_t$ to $T$, remove the edge $tt'$ and add edges $tt''$ and $t''t'$, we get another tree decomposition of $G$ of the same width. We call the resulting tree decomposition an extension of $\td$. 
	In the following, we restrict this operation to edges $tt'$ such that the degree of $t$ is two in the core of $\td$. For a tree decomposition $\td$ let $\E^{\given t}(\td)$ be the set of all tree decomposition that can be obtained from $\td$ by a series of such extensions. 
	Notice that although $\E^{\given t}(\td)$ is infinite, we will define a finite number of equivalence classes. Notice that the tree of the compact representation of the core is the same for each element of $\E^{\given t}(\L^{\given t}(\td))$, but the integer sequences differ.
	
	In the next lemma we show that we may assume that a vertex added to a subpath in $\td$ of $t_1, \dots, t_k$ with integer sequence $(a_1, \dots, a_k)$ is added up to a node $t_\ell$ such that there are no $i \le \ell \le j$ with $\min(a_i,a_j) < a_\ell < \max(a_i,a_j)$.
		
	\begin{lemma}\label{compact}
		Consider a node $\given t$ of the given tree decomposition and the compact representation of the core of a tree decomposition in $\TD_{\given t}$. 
		Assume $(a_1, \dots, a_k)$ is the integer sequence of a node and $i < j$ are such that $a_i \ge a_k$ and $a_j \le a_k$ for all $i \le k \le j$. Let $\ell$ be the smallest index such that $a_\ell > a_i$ (if such an $\ell$ exists, set $\ell=j$ otherwise). Let $Z_i, \dots, Z_j$ be the vertices of $V \setminus Y^+_{\given t}$ of bags of the nodes corresponding to $a_i, \dots, a_j$ in the final tree decomposition. There is an optimal tree decomposition such that $Z_{\ell'}=Z_j$ for all $\ell \le \ell' \le j$ and no edges adjacent to $a_{\ell'}$ are added to the core later in the construction. 
		Similarly, if $a_i \le a_k$ and $a_j \ge a_k$, we can assume $Z_{\ell'}=Z_i$ and no edges adjacent to $a_{\ell'}$ are added to core for all $i \le \ell' \le \ell$, for $\ell$ being the largest index such that $a_\ell > a_j$.
		
	\end{lemma}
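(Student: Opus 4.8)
The plan is to argue by an exchange (surgery) of the kind used in Lemma~\ref{core}. Fix the node $\given t$, the compact-core path $t_1,\dots,t_k$ with its size sequence $(a_1,\dots,a_k)$, and the indices $i<j$ and $\ell$ from the statement; it suffices to treat the first case, where $a_i$ is the maximum and $a_j$ the minimum of $a_i,\dots,a_j$, since the second is its mirror image. Among all tree decompositions of $G$ of minimum width whose restriction to $G^+_{\given t}$ (intersect every bag with $Y^+_{\given t}$) is sequence-equivalent to the given one in $\TD_{\given t}$, fix one, $\td^{*}=(T^{*},(X^{*}_t)_{t\in V(T^{*})})$; if no such extension exists the statement is vacuous. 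Write $S$ for the common vertex set of the restricted bags of $t_1,\dots,t_k$, write $F_m=X_{t_m}\cap(Y^+_{\given t}\setminus Y_{\given t})$ for the part already forgotten below $\given t$ (so $a_m=|S|+|F_m|$), and write $Z_m=X^{*}_{t_m}\cap(V\setminus Y^+_{\given t})$ for the future part, so that $X^{*}_{t_m}=S\cup F_m\cup Z_m$.

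First I would collect the facts that make the surgery legal. By coherence, for every $w\in V\setminus Y^+_{\given t}$ the indices $m$ with $w\in X^{*}_{t_m}$ form an interval, and $S$ lies in every run bag, so every edge of $G$ incident to such a $w$ whose other endpoint lies in $S$, or is another future vertex occurring on the run, is already covered on the run. Because $a_j$ is minimal on $i,\dots,j$, the node $t_j$ has the most spare capacity of the run; the role of the index $\ell$ is precisely to mark, along the monotone stretch, the first node whose bag is still small enough that $t_\ell,\dots,t_j$ can all simultaneously carry the set $Z_j$ without any bag exceeding the width of $\td^{*}$, and this, together with the shape of $(a_i,\dots,a_j)$, is the only place the exact position of $\ell$ enters. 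The surgery then rewrites the future intervals on $[\ell,j]$: for each future vertex $w$ whose interval meets $[\ell,j]$, replace its portion of $[\ell,j]$ by all of $[\ell,j]$ if $w\in Z_j$ and by the empty set otherwise, and, to absorb the vertices thereby squeezed out of an enlarged bag, attach at $t_\ell$ a fresh pendant path whose bags are subsets of $X^{*}_{t_\ell}$ carrying exactly those displaced vertices, in the manoeuvre of Lemma~\ref{core}. I would then verify the three axioms: coherence, because every rewritten future set is again an interval and the attached path is a pendant; edge coverage, because the edges listed above remain covered either on the rewritten intervals or on the new pendant path; and width, because after the rewrite each of $t_\ell,\dots,t_j$ has bag size $a_m+|Z_j|$, which by the choice of $\ell$ is at most the width of $\td^{*}$ plus one, while every new pendant bag is a subset of an old bag. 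Finally, since all edges that $t_\ell,\dots,t_j$ still have to cover are already covered, no later introduction above $\given t$ needs to enlarge these nodes beyond $Z_j$ or to attach a new core-edge at them, which yields the remaining part of the conclusion; the second case is the mirror image, centred at $t_i$ over the range $i\le\ell'\le\ell$.

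I expect the width estimate in the surgery step to be the main obstacle: one has to show that dumping the single common future set $Z_j$ onto the entire sub-run $t_\ell,\dots,t_j$ never overflows a bag, and this genuinely fails for an arbitrary sub-interval, so it is exactly what pins $\ell$ to the position prescribed in the statement and uses that $a_i,\dots,a_j$ attains its maximum at $i$ and its minimum at $j$. A secondary point of care is avoiding a cascade when re-routing: rewriting one future vertex's interval must not force rewriting another's. Moving the displaced vertices entirely off the run, onto the pendant path, rather than sliding them along it, is what prevents such a cascade, for the same reason as in the proof of Lemma~\ref{core}.
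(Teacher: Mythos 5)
The statement's two cases as printed appear to have their roles swapped (the figure and the paper's own proof equalize to $Z_j$ precisely when $a_j$ is the \emph{largest} of $a_i,\dots,a_j$), but the mathematics the paper actually proves is clear, so I will judge you against that.

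You have the equalization pointing the wrong way, and this is not a cosmetic slip: it is the whole width calculation. You work in the case $a_i=\max$, $a_j=\min$ and propose to overwrite every $Z_m$ on a sub-run by $Z_j$, the future set of the \emph{smallest} node. The resulting bag at $t_m$ has size $a_m+|Z_j|$, and since $a_m\ge a_j$ this can exceed $a_j+|Z_j|$, the only bound you actually know. There is no choice of $\ell$ that fixes this from the integer sequence alone, and you essentially admit as much when you settle for ``width of $\td^{*}$ plus one.'' The paper's surgery goes the other way: it overwrites the run by the future set of the \emph{largest} node (so $a_m+|Z_{\max}|\le a_{\max}+|Z_{\max}|\le$ width), and it parks the original $Z_m$'s, one per copy, on repeated clones of the \emph{smallest} node (so $a_{\min}+|Z_m|\le a_m+|Z_m|\le$ width). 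Both inequalities fall out of the defining property $a_{\min}\le a_m\le a_{\max}$ on the interval, with no extra slack needed.

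Your absorption mechanism is also off. You move the displaced future vertices onto a pendant path at $t_\ell$ with bags that are ``subsets of $X^{*}_{t_\ell}$,'' but a displaced $w\in Z_m$ for $m\neq\ell$ need not be in $X^{*}_{t_\ell}$ at all, so such a pendant cannot hold it, and the edges from $w$ to vertices of $S$ or to other members of $Z_m$ go uncovered. The paper's device of extending the minimum-size node $j-i$ times and shipping the future subtrees along with each $Z_m$ (after first using Lemma~\ref{core} to separate the subtrees hanging off the run into a ``past-only'' and a ``future-only'' kind) is exactly what makes the re-covering work. A final ingredient you do not mention, which is what makes it legitimate to decouple the forgotten part $F_m$ from the future part $Z_m$ in the first place, is that no future vertex (in $V\setminus Y^+_{\given t}$) is adjacent in $G$ to any already-forgotten vertex (in $Y^+_{\given t}\setminus Y_{\given t}$), since no bag of the given decomposition can contain both; without this observation, moving $Z_m$ away from $F_m$ could lose an edge.
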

	
	\begin{figure}
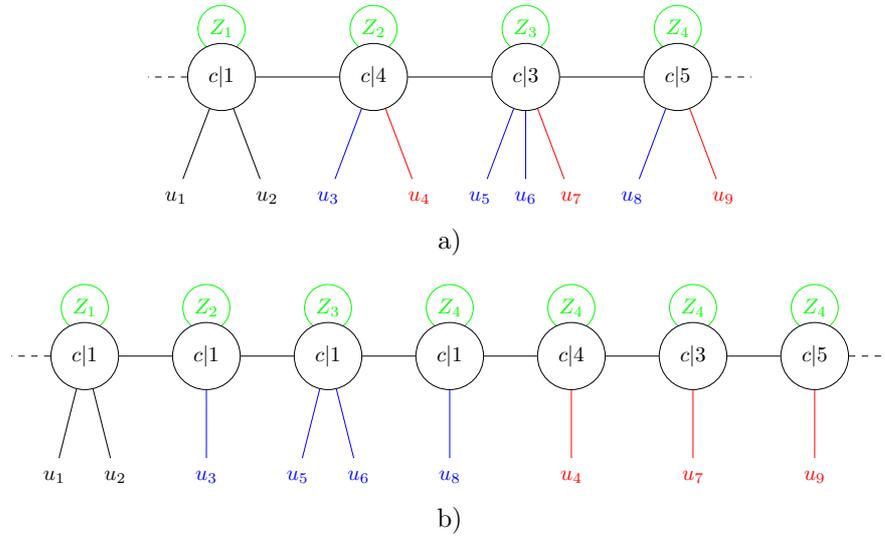

		\begin{center}
			\scalebox{\figscale}{
		\ls{\begin{tabular}{c}
			\scalebox{0.8}{\cceins} \\
			a)\\[1em]
			\scalebox{0.8}{\cczwei}\\
			b)
		\end{tabular}}{
		\begin{tabular}{cc}
	a) & \cceins \\
	b) & \cczwei
\end{tabular}}}
		\end{center}
		\ls{}{\figspace}\vspace*{-1em}
		\caption{Assume we have a path of four nodes in the core such that the sets of vertices of the restricted bags are equal (and hence these and possible further nodes are replaced by a single node in the restricted core), say $X$. Assume furthermore that the first node has the smallest number of vertices and the last has the largest number, and that the sets of vertices $Z_1, \dots, Z_4$ are added to these nodes later on in the construction. We proved that we may assume that $Z_2=Z_3=Z_4$. Otherwise, we could add three copies of the first node in the beginning and give the first three nodes the additional vertices $Z_1, Z_2$ and $Z_3$ and all remaining nodes the additional vertices $Z_4$. By Lemma \ref{core}, the bags of nodes incident to the nodes may be assumed to be either subsets of $X \cup Y^+_{\given t}$ (shown in red) or subsets of $X \cup (V \setminus Y^+_{\given t})$ (shown in blue). The first kind of edges stay at their original positions whereas the second kind are moved to the front.}\label{figuretypcial}\vspace*{-1em}
	\end{figure}

	\begin{proof}
		Consider an arbitrary tree decomposition $\td$ and assume it does not have the property. Let $t$ be a node of the compact representation of the core for which the property does not hold for the subsequence $(a_i, \dots, a_j)$ of its integer sequence and let $t_i, \dots, t_j$ the nodes of the tree of $\td$ corresponding to $a_i, \dots, a_j$.
		Notice that we can assume that the edges adjacent to the nodes corresponding to $t_i, \dots, t_j$ have bags that are either subsets of $Y^+_{\given t} \cap X_t$ or subsets of $(V \setminus Y^+_{\given t}) \cup X_t$ by Lemma \ref{core}.
		
		We extend $t_i$, i.e.,~the node with the smallest bag, $j-i+1$ times, i.e., instead of a single node $t_i$ for size $a_i$, we now have nodes $t_i^1, \dots, t_i^{j-i+1}$, all of size $a_i$. We move the sets $Z_i, \dots, Z_j$ to the nodes $t_i^1, \dots, t_i^{j-i+1}$ and with them all edges whose other incident nodes have bags are subsets of $(V \setminus Y^+_{\given t}) \cup X_t$. The bags of nodes $t_2, \dots, t_j$ are all set to $Z_j$ (see Figure \ref{figuretypcial} for an illustration). 
		
		This results in a tree decomposition of at most the same width (The bags of $t_i$ and $t_j$ are not changed. For the new nodes $t_i^k$, we have that $|X_{t_i^k}|$ at most the size of bag of the node that contained the assigned $Z_{\ell}$ in the originating tree decomposition. The bags of the nodes $t_{i+1}, \dots, t_{j-1}$ have at most the size of that of $t_j$) that has the property for $a_i, \dots, a_j$. We can repeat this construction until all such conditions are satisfied.
	\end{proof}

	Notice furthermore that if we add a vertex $v$ up to the node $t_\ell$, we can first extend $t_\ell$ and add only up to the first appearance of $t_\ell$ such that after the insertion, the first node not containing $v$ is the copy of $t_\ell$ and still has size $a_\ell$.

	For an integer sequence $A=(a_1, \dots, a_k)$, we define its typical sequence $\tau(A)$ as follows: Apply one of the following two operations until no further such operation is possible (we will show that the resulting sequence is unique and hence it is indeed a definition)
	\begin{itemize}
		\item if $a_i=a_{i+1}$ for $1 \le i \le k-1$, replace $(a_1, \dots, a_k)$ by $(a_1, \dots, a_i, a_{i+2}, \dots, a_k)$
		\item 	for $1 \le i<j \le k$ with $min(a_i,a_j) \le a_\ell \le max(a_i,a_j)$ for all $i \le \ell \le j$ and replace $(a_1, \dots, a_k)$ by $(a_1, \dots, a_i, a_j, \dots, a_k)$.
	\end{itemize}

	\ls{We will prove in the next section}{In \cite{arxiv} we show} that $\tau(A)$ is a unique subsequence of $A$ (Property P1) and there are at most $2 \cdot 2^{2\given\tw}$ different typical sequences (Property P2). Hence, there are only a bounded number of typical sequences if their integers are bounded by $\given\tw$. We will call two integer sequences equivalent, if their typical sequences are the same. An extension of an integer sequence $(a_1, \dots, a_n)$ is a sequence of the form $(a_1, \dots, a_i, a_i, \dots, a_n)$ for $1 \le i \le n$, i.e.,~one integer is repeated. The set of all extensions of an integer sequence $A$ is denoted as $\E(A)$. Notice that an extension of a tree decomposition corresponds to an extension of the integer sequence assigned to one of the nodes of the compact core. Notice that all extensions $A^{\E} \in \E(A)$ of $A$ have the same typical sequence.

	Let $\td, \td' \in TD_{\given t}$ with $\td' \in \E^{\given t}(\td)$ and assume that we can use $\td$ in the construction of a tree decomposition of $G$ of width $\tw$. Notice that we can also use $\td'$ to construct such a tree decomposition $\td''$ (if we do not restrict to tree decompositions of at most $n$ nodes) since we can extend all tree decompositions used in the construction of $\td''$ at the same nodes as we extended $\td$ to obtain $\td'$. We call an integer sequence $A$ superior to $B$ if there are $A' \in \E(A)$ and $B' \in \E(B)$ such that $A' \le B'$. \ls{We will}{In \cite{arxiv} we} show that $A$ is superior to $\tau(A)$ (Property (P3)) and that $\tau(A)$ is superior to $A$ (Property (P4)). Hence, if we constructed a tree decomposition for which a node of the compact representation has the integer sequence $A$, we can assume that we have a tree decomposition with integer sequence $\tau(A)$ (even if no such tree decomposition exists) since for all tree decompositions that can be constructed from the latter, we can construct a tree decomposition of the same width as the first.

	\ls{}{Finally, we show in \cite{arxiv} that we can compute all typical sequences that can arise when adding up two extensions of two typical sequences (Property (P5)).}

	The compressed core of a tree decomposition is the compact representation of its core, where each node is assigned its typical sequence together with the size of the largest bag. We call two tree decompositions sequence-equivalent, denoted as $\equiv^{\given t}_{s}$, if their compressed cores are the same.
	
	In order to compensate the replacement of an integer sequence by its typical sequence, we allow arbitrary extensions of the compressed core. 
	\ls{Formally, we define $J^{\E}_{\given t}(\td^1, \td^2)=J(\E^{\given t}(\L^{\given t}(\td^1)), \E^{\given t}(\L^{\given t}(\td^2)))$, $I^{\E}_{\given t}(\td)=I_{\given t}(\E^{\given t}(\L^{\given t}(\td)))$ and $F^{\E}_{\given t}(\td)=F_{\given t}(\E^{\given t}(\L^{\given t}(\td)))$.}{}
	
	\ls{
	We now show how the construction in join-, introduce, and forget-nodes changes and that it suffices to enumerate one tree decomposition per equivalence class. In order to do so, we will use further properties on typical sequences, which we will show in the next Section.}{}
	
	\ls{
	\begin{construction}\label{ccconstruction} (Construction for all tree decompositions in $\E^{\given t}(\L^{\given t}(\td))$)
		\begin{description}
			\item[Join-node:] Notice that the extensions are operations that only affect the integer sequences assigned to the nodes of the compact representation of the core. Hence, we can consider these integer sequences independently. Consider a node $t$ of the compact representation of the core and let $A$ and $B$ be the integer sequences assigned to it in $\given t^1$ and $\given t^2$.
			In order to compensate extensions in a join, we have to be able compute all typical sequences in $\E(A)+\E(B)-|X_t \cap Y_{\given t}|$, where $\E(A)+\E(B):=\{A'+B' \mid A' \in \E(A), B' \in \E(B) \mbox{ and $A'$ and $B'$ have the same length} \}$, which we call property (P5). Notice that we can construct the compressed cores of the elements of $J^{\E}_{\given t}(\td^1,\td^2)$ from the compressed cores of $\td^1$ and $\td^2$.
			
			\item[Introduce-node:] As in the last construction, we have two cases: either $v$ is inserted to a bag whose node is in the core, in which case the core does not change, or $v$ is added to a bag whose node is not the core and the corresponding consistent path is now in the core.
			
			In the second case, all nodes on the added path with the same subset of $X_t$ are represented by a single node in the compact representation of the core and the integer sequences of these nodes are of the form $(a, \dots, a)$, where $a$ is the size of the subset. We simply replace this integer sequence by this typical sequence $(a)$.
			
			Now consider the case that $v$ is added to at least one node of the core. Let $t$ be a node of the compressed core that represents one of the nodes where $v$ is added, $(a_1, \dots, a_f)$ its integer sequence and $(a_{i_1}, \dots, a_{i_{f'}})$ its typical sequence. Let $t_1, \dots, t_f$ be the nodes represented by $t$. Notice that by Lemma \ref{compact}, we can assume that $v$ is inserted up a node $t_\ell$ with size $a_\ell$ such that there are no $i, j$ with $i < \ell < j$ and $a_i \le a_k$ and $a_j \ge a_k$ for all $i \le k \le j$. The remaining indices are part of the typical sequence, i.e.~$\ell=i_{\ell'}$ for some $\ell'$. Furthermore, we may assume that $a_{\ell+1}=a_\ell$ before the addition of $v$, as we can add a redundant node with bag $X_{t_{\ell}} \setminus \{v\}$ as a neighbor of $t_\ell$ to the final tree decomposition, and replace the edge $t_\ell t_{\ell+1}$ by the edge between $t_{\ell+1}$ and this new node. Hence, if $v$ is added to an initial subpath, say $(t_1, \dots, t_\ell)$, the integer sequences of the two nodes of the compact representation of the core are $(a_1+1, \dots, a_\ell+1)$ and $(a_\ell=a_{\ell+1}, \dots, a_k)$. The corresponding typical sequences are $\tau(a_{i_1}+1, \dots, a_{i_{\ell'}}+1)$ and $\tau(a_{i_{\ell'}}, \dots, a_{i_{f'}})$. We can argue similarly, if $v$ is added to a general subpath. 
			
			Notice that, again, the compressed cores of the constructed tree decompositions can be computed directly from the compressed core of the tree decomposition used in the construction.
			
			\item[Forget-node:] Notice that the removal of a vertex $v$ can result in the joining nodes from the compact representation of the core. Hence, we will have to be able to join typical sequences, i.e.,~$\tau(a_1, \dots, a_f, b_1, \dots, b_g)$ has to be computable from $\tau(a_1, \dots, a_f)$ and $\tau(b_1, \dots, b_g)$. When proving (P1), we will show that the order in which we apply the operations does not matter and thus $\tau(a_1, \dots, a_f, b_1, \dots, b_g)=\tau(\tau(a_1, \dots, a_f), \tau(b_1, \dots, b_g))$. Hence, the compressed cores can be computed directly from the compressed core of the tree decomposition used in the construction.
		\end{description}
	\end{construction}
}{We will not show the construction of $J^{\given t}(\E^{\given t}(\L^{\given t}(A)),\E^{\given t}(\L^{\given t}(B)))$, $I^{\given t}(\E^{\given t}(\L^{\given t}(A)))$ and $F^{\given t}(\E^{\given t}(\L^{\given t}(A)))$ but give the main points. First, we note that we can consider the integer sequences assigned to the nodes of the compact core independently. For a join-node, we have to compute all typical sequences of integer sequences obtained by adding two extensions of the given typical sequences which can by done by (P5). For an introduce-node, use the Lemma \ref{compact} by which we can assume that the vertex is either added to all nodes of the integer sequence or it is added up to a node which is contained in the typical sequence and this node can be assumed to be extended before. Hence, we can perform the necessary operations directly on the typical sequences. In a forget-node, we potentially have to join typical sequences if restricted bags become equal.}

	Notice that the compressed cores of the elements in $J^{\given t}(\E^{\given t}(\L^{\given t}(A)),\E^{\given t}(\L^{\given t}(B)))$, $I^{\given t}(\E^{\given t}(\L^{\given t}(A)))$ and $F^{\given t}(\E^{\given t}(\L^{\given t}(A)))$ can be directly computed from the compressed cores of $A$ (and $B$) and hence Lemma \ref{cancompress} holds for $\equiv^{\given t}_s$.

	\ls{	
	\subsection{The NeccesaryProperties of Typical Sequences}\label{typcial}

	In this section, we will show properties (P1) to (P5) of typical sequences, as well as properties (H1), (H2) and (H3), which are used in the proof. We want to mention that typical sequences are also of interest in different contexts (see \cite{DBLP:journals/corr/abs-1905-03643}).

	\begin{figure}
		\begin{center}
		\begin{tabular}{ccccccc}
			a) & 2 & 5 & 3 & 6 & 4 & 3 \\
			b) & 2 &   &   & 6 &   & 3 \\
			c) & 2 & 2 & 2 & 6 & 3 & 3 \\
			d) & 2 & 6 & 6 & 6 & 6 & 3
		\end{tabular}
		\end{center}
		\caption{An integer sequence (a), its typical sequence (b), the extension of the typical sequence that is smaller than the original sequence (c), and that is larger than the original sequence (d).}\label{figureextension}		
	\end{figure}
	
	\begin{description}
		\item[(P1)] The typical sequence $\tau(A)$ of $A$ is unique.\\
		Proof: Let $(a_1, \dots, a_k)$ be an integer sequence and assume $(a_1, \dots, a_i, a_j, \dots, a_k)$ obtained from $(a_1, \dots, a_k)$ by one operation. For $\ell \notin \{i+1, \dots, j-1\}$ we claim that $a_\ell$ can be removed by an operation from $(a_1, \dots, a_k)$ if and only if it can be removed by one operation from $(a_1, \dots, a_i, a_j, \dots, a_k)$. Using this iteratively, we see that no matter what the order of the operations is, it does not change whether we can remove an integer or not. In order to prove this claim, we have to slightly modify the second operation. If $\min(a_i,a_j) \le a_\ell \le \max(a_i,a_j)$ for all $i \le \ell \le j$, let $j^*$ be the smallest index such that $a_{\tilde j}=a_j$ for all $\tilde j \ge j^*$. We replace $(a_1, \dots, a_k)$ by $(a_1, \dots, a_i, a_{j^*}, a_{j+1}, \dots, a_k)$, i.e.,~instead of keeping $j$, we keep the same integer but another appearance (the first position such that afterwards all integers are the same as $a_j$). The modification makes the indices of the integers in the final typical sequence unique (and not just the values).
		
		The claim is obvious if $(a_1, \dots, a_i, a_j, \dots, a_k)$ is obtained from $(a_1, \dots, a_k)$ by an operation of the first type. Hence, consider the operation $o$ of the modified version of second type replacing $(a_1, \dots, a_k)$ by $(a_1, \dots, a_i, a_{j^*}, \dots a_j)$.
		
		Assume we can remove $a_\ell$ after $o$ is applied as $a_\ell=a_{\ell-1}$. Since $j^*$ cannot be $\ell$ in the operation, the claim follows, since we can still apply the same operation. Similarly, if $a_\ell$ can be removed before $o$ is applied as $a_\ell=a_{\ell-1}$, we can remove $a_\ell$ after $o$ is applied except if $o$ removes $a_\ell$.
		
		We will now show the claim for the second type of operation.
		
		Assume we can remove $a_\ell$ as there are $i' < \ell < j'$ such that $\min(a_{i'},a_{j'}) \le a_\ell \le \max(a_{i'},a_{j'})$ after $o$ is applied. As $i'$ and $j'$ are indices of $A$, we can also apply the operation before $o$ is applied.
		
		Assume we can remove $a_\ell$ as there is $i' < \ell < j'$ such that $\min(a_{i'},a_{j'}) \le a_\ell \le \max(a_{i'},a_{j'})$ before $o$ is applied. If $i'$ and $j'$ are in $(a_1, \dots, a_i, a_j, \dots, a_k)$, we can still apply the same rule after $o$ is applied. If $i'$ is removed, one of $a_i, a_j$ is smaller and the other larger as $a_{i'}$. Replacing $i'$ suitably by $i$ or $j$, we can again apply the rule after $o$ is applied. The same is true if $a_{j'}$ was removed.
		 
		\item[(H1)] If $(a_1, \dots, a_k)$ is a typical sequence, then $(a_k, \dots, a_1)$ is a typical sequence.\\
		Proof: Directly from the definition.
		 
		\item[(H2)] For an integer sequence $A$ with $\ell$ different values, the length of $\tau(A)$ is at most $2\ell$.\\
		Proof: We first show by induction that the length is at most $\ell$ if the sequence starts with the largest or smallest value and this value appears only once. If there is only one value, the typical sequence can only have one entry as we could otherwise apply the first operation. Assume now that we have at least two different values. We will discuss the case where the sequence starts with the largest value.
		In this case, the smallest value has to be unique and the second entry of the integer sequence (as we can remove the part between the largest value and the last appearance of smallest value by one application of the second operation). Hence, the sequence that starts with the second entry is an integer sequence that starts with the smallest value, that has at most $\ell-1$ different values, and where the smallest value appears only once. Using the induction hypothesis, we see that its length is at most $\ell-1$. Hence, including the first entry, we have a sequence of length at most $\ell$.
		
		Now consider an arbitrary typical sequence $(a_1, \dots, a_k)$. If all integers are the same, the typical sequence can have only one entry and, hence, a length of $1$. Assume now that we have at least two different values. Let $i$ and $j$ be indices of a smallest and largest entry and assume w.l.o.g.~$i < j$. It holds that $j-i=1$ and the smallest and largest integer appear only once, as we can remove all integers between the first appearance of $a_i$ and the last appearance of $a_j$ by one application of the second operation. Hence $(a_1, \dots, a_k)$ 
		is composed of an integer sequence that ends with the unique smallest value followed by an integer sequence starting with the unique largest value. Using the claim above and (H1), we find that its length is at most $\ell+\ell=2\ell$.
		
		\item[(P2)] There are at most $2 \cdot 2^{2\ell}$ typical sequences with $\ell$ different values.\\
		By the proof of (H2) we know that the largest and the smallest value are neighbored. We show that there are at most $2^{2\ell}$ typical sequences where the smallest value is immediately before the largest value. As the same number can be proven if the largest value is immediately before the smallest, the claim follows.
		
		We count the number of typical sequences up to the smallest value and starting from the largest value, starting with the second term.
		Again by the proof of (H2) we conclude that in a typical sequence starting with the largest value each value appears at most once. Furthermore, the typical sequence is uniquely determined by the set of values that appear in the sequence, as the following entries have to be alternating the smallest and largest of the remaining values. Hence there are at most $2^{\ell}$ such sequences. The same holds for the number of typical sequences ending with the smallest value and hence there are at most $(2^\ell)^2=2^{2\ell}$ typical sequences of where the largest value immediately follows the smallest value.
		
		\item[(P4)] The typical sequence $\tau(A)$ is superior to $A$.\\
		Proof: $\tau(A)$ is obtained from $A$ by iteratively removing parts of $A$. We show the claim by induction on the number of removals. Assume $(a_1, \dots, a_k)$ is replaced by $(a_1, \dots, a_i, a_j, \dots, a_j)$. We extend the smaller of $a_i$ and $a_j$ for $j-i-1$ times to obtain a sequence of length $k$ that is at most $(a_1, \dots, a_k)$ (see Figure \ref{figureextension}).
		\item[(P3)] $A$ is superior to its typical sequence $\tau(A)$.\\
		Proof: Similarly to the above, but we extend the larger value of $a_i$ and $a_j$ to obtain a sequence that is at least $(a_1, \dots, a_k)$ (see Figure \ref{figureextension}).
		\item[(H3)] The typical sequence of $(a_1-c, \dots, a_k-c)$ is obtained from $\tau(a_1,\dots, a_k)$ by subtracting $c$ from each number.\\
		Proof: This property follows directly from the definition of typical sequences.
		\item[(P5)] We have to show that we can compute $\tau(\E(A)+\E(B)-c)$. Using (H3), it suffices to show how to compute $\tau(\E(A)+\E(B))$. The construction is as follows:
		 
		For typical sequences $A=(a_1, \dots, a_k)$ and $B=(b_1, \dots, b_\ell)$, let $G^{A,B}=(V^{A,B},E^{A,B})$ be the directed graph with nodes $v_{i,j}$ for $1 \le i \le k$ and $1 \le j \le \ell$ and edges $E^{A,B}=\{(v_{i,j},v_{i',j'}) \mid 1 \le i \le i' \le k, 1 \le j \le  j' \le \ell, i'-i \le 1, j'-j\le 1, (i,j) \not= (i',j')\}$ and let $w(v_{i,j})=a_i+b_j$. Let $P(A,B)= \{ (w(u_1), \dots, w(u_{m})) \mid (u_1, \dots, u_{m}) \mbox{ be the nodes on a}$ $\mbox{path in $G$ from $v_{1,1}$ to $v_{k, \ell}$ } \}$ (see Figure \ref{typicalpath}). 
		
		$P(A,B)$ is computable since we can enumerate all paths of a graph. We claim that $\tau(P(A,B))=\tau(\E(A)+\E(B))$.

		\begin{figure}
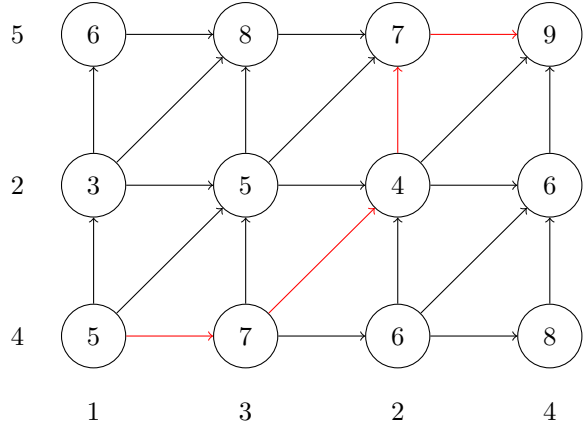

			\begin{center}
			\gab
			\end{center}
			\caption{The graph $G^{AB}$ constructed for the sequences $A=(1,3,2,4)$ and $B=(4,2,5)$. The path in red corresponds to the extensions $(1,3,2,2,4) \in \E(A)$ and $(4,4,2,5,5) \in \E(B)$.}\label{typicalpath}
		\end{figure}		
		
		Proof: We first show that each sequence of $P(A,B)$ is an element of $\E(A) + \E(B)$. Consider a path $v_{i_1,j_1}, \dots, v_{i_m,j_m}$ between $v_{1,1}$ and $v_{k,\ell}$ in $G^{A,B}$. Let $A'=(a_{i_1}, \dots, a_{i_m})$ and $B'=(b_{j_1}, \dots, b_{j_m})$. Notice that $A' \in \E(A)$ as $i_1, \dots, i_m$ are non decreasing and increase in each step at most by one. Similarly $B' \in \E(A)$. By definition $A'+B'=(w(v_{i_1,j_1}), \dots, w(v_{i_m, j_m}))$.
		
		Now consider arbitrary extensions $A'=(a_{i_1}, \dots, a_{i_m}) \in \E(A)$ and $B'=(b_{j_1}, b_{j_m}) \in \E(B)$ of the same length. If $i_k=i_{k+1}$ and $j_k=j_{k+}$, we have $(A'+B')_k=(A'+B')_{k+1}$ and we can delete index $k+1$ from $A'$ and $B'$ without changing the typical sequence of $A'+B'$. Hence, we can assume that $i_k \le i_{k+1} \le i_k+1$, $j_k \le j_{k+1} \le j_k+1$ and $(i_k,j_k) \not= (i_{k+1}, j_{k+1})$. Furthermore, $i_1=j_1=1$, $i_m=k$ and $j_m=\ell$. Hence $(v_{i_1,j_1}, \dots, v_{i_m,j_m})$ is a path in $G^{A,B}$ and $(w(v_{i_1,j_1}), \dots, w(v_{i_m,j_m}))=A'+B'$.
	\end{description}
	
}{For proof of the necessary properties of typical sequences, we refer to the extended version of this paper\cite{arxiv}. Furthermore, we estimate the number of compact cores enumerated for each bag and the running time in that paper. Here, we only remark that it is not hard to see that this number and the running time can be bounded by a function computable from $\given\tw$. }

\ls{	
	\subsection{Number of equivalence classes and computation time}
	
	The number of equivalence classes we enumerate in a bag $Y_{\given t}$ of the given tree decomposition is bounded by the number of different compressed cores for the bag. Recall that a compressed core is a tree with at most $2\given\tw^2$ nodes where a restricted bag assigned to each node and a typical sequence is assigned to each edge. 
	The number of equivalence classes can be bounded by the product of the number of tree topologies, the number of ways to assign subset of vertices to the nodes of the tree and the number of ways to assign typical sequences to the edges of the tree.
	
	Although much better estimates for the number of tree topologies are known, we use the following crude estimate: a tree of at most $2\given\tw^2$ nodes can be described by a graph over exactly $2\given\tw^2$ nodes by given the edges of the tree and ignoring the nodes without adjacent edges. The number of graphs over $2\given\tw$ nodes is $2^{(2\given\tw)^2}=2^{4\tw^2}$
	
	A restricted bag is characterized by the subset of the vertices of $Y_{\given t}$ it contains and its size and hence the number of choices for a restricted bags is at most $2^{\given \tw} \cdot \given\tw \le 2^{2\given\tw}$. Choosing a restricted bag for each node gives $(2^{2\given\tw})^{2\given\tw^2}=2^{4\given\tw^3}$ possibilities.
	
	In (P2) we proved that there are at most $2 \cdot 2^{2\given\tw} = 2^{2\given\tw+1} \le 2^{3\given\tw}$ typical sequences with values bounded by $\given \tw$. Choosing a typical sequence for each edge gives at most $(2^{3\given\tw})^{2\given\tw^2} \le 2^{6\given\tw^3}$ possibilities.
	
	Hence, in total we can bound the number of compressed cores by $2^{4\given\tw^2} \cdot 2^{4\given\tw^3} \cdot 2^{6\given\tw^3} = 2^{4\given\tw^2+4\given\tw^3+6\given\tw^3} \in 2^{\Oh(\tw^3)}$. 
	
	We now want to argue that the running time to construct the compressed cores is polynomial in the bound on their number which then leads to a total running time of $(2^{\Oh(\given\tw^3)})^c \cdot n=2^{\Oh(\given\tw^3)} \cdot n$ where $c$ is the exponent of the polynomial.	Notice that although many of these estimates seem to be quite crude, no exact algorithm with a running time bound of the form $2^{o(\tw^3)} \cdot n^{\Oh(1)}$ is known.
	
	For a join-node $\given t$, we construct all compressed cores that can build for a pair $C_1,C_2$ of compressed cores of the two children and add it to the enumerated compressed cores if it is not already contained (which can be tested by simply iterating over all compressed cores that are already enumerated). In order to do so, we have to check that the tree topologies and the vertices of the restricted bags match in $C_1$ and $C_2$. We compute the sizes of the restricted bags in the resulting compressed core. Furthermore, for each edge of the tree, we compute all possible typical sequences that can arise by joining the corresponding typical sequences of $C_1$ and $C_2$ by enumerating all path in the graph constructed to prove (P5) and reduce the corresponding integer sequences to typical sequences. Notice that although the number of path is exponential in the size of the graph, it is polynomial in the number of possible compressed cores.
	
	For a forget-node $\given t$, we iterate over all compressed cores enumerated in the child of $\given t$, remove the forgotten vertex of all restricted bags and computing resulting compressed core. If this is not already enumerated, we add it to the enumerated compressed cores.
	
	For an introduce-node $\given t$, the simplest way to describe a construction is to invert the approach of a forget-node. We enumerated all possible compressed cores $C$ for $\given t$. For each, we compute the compressed core $C'$ that results in removing the introduced vertex from all bags in $C$ and test whether we enumerated $C'$ in the child of $\given t$. A constructive approach would select a compressed core $C'$ in the child and then either add the introduced vertex into a subtree of the tree of $C$ or to a single new bag and join this bag by a path of nodes whose bags are strictly nested subsets to one of the nodes of the tree of $C'$. Each integer sequence of a new edge consists of only one entry whose value is the size of the bag.

}{}
\ls{
	\section{The Algorithm to Compute the Optimal Tree Decompositon}
	
	In this section, we will show how we can use the algorithm described above to compute an optimal tree decomposition without a given tree decomposition. We start with a simple variant, leading to an algorithm with quadratic running time (in the number of vertices of $G$ for a graph with an arbitrary but fixed treewidth) and then give a more complicated algorithm with linear running time.
	
	In this section, we give an algorithm to test whether the treewidth is at most $\tw$ for a given value $\tw$. We can compute the treewidth from such an algorithm by testing increasing values for $\tw$ starting with $1$ and increasing the value until we find the treewidth. Hence, the algorithm outlined is called exactly $\tw$ times until we find the correct value. Let $n=|V|$.
	
	\subsection{Simple Version}
	
	We use the fact that the treewidth of a minor of a graph is at most that of a graph itself (see Section \ref{basic}).
	
	We assume that the given graph $G$ is connected since we can otherwise compute tree decompositions for each connected component independently, as stated in Section \ref{basic}.
	
	Let again $G=(V,E)$ be the given graph. 
	If $G$ contains only one vertex, we are done. Otherwise there is at least one edge in $G$. Take an arbitrary edge $uv$ and shrink it and let $[uv]$ be the new node and $G^{uv}$ be the resulting graph (which is also connected). Notice that we compute a minor and hence the treewidth of the $G^{uv}$ is at most that of $G$. Recursively compute a tree decomposition of $G^{uv}$ of width $\tw$. From the computed tree decomposition, construct a tree decomposition of $G$ of width at most $\tw+1$ by replacing each appearance of the vertex $[uv]$ by the two vertices $u$ and $v$. Using the algorithm of Section \ref{KloksAlg}, we compute a tree decomposition of $G$ of width $\tw$.
	
	For every fixed value of $\tw$, the running time $T(n)$ is $T(n) \le T(n-1) + \Oh(n)=\Oh(n^2)$ as the running time of the algorithm in Section \ref{KloksAlg} is $\Oh(n)$ for every fixed value of $\tw$.
	
	\subsection{Improved Version}

	The following algorithm is based mainly on ideas from Perkovic and Reed\cite{DBLP:journals/ijfcs/PerkovicR00} and only slightly simplified.
	
	We use the fact that a pair of vertices $u,w$ with more than $\tw$ common neighbors are contained in at least one common bag and, hence, the graph where we add the edge $(u,w)$ has the same treewidth (see Section \ref{basic}).
	
	We first test if the graph has treewidth $1$ (i.e.,~is a graph without edges) or $2$ (i.e.,~the graph is a tree) in linear time. In the following, we will assume that the treewidth is at least $3$.
	
	Assume we want to compute a tree decomposition of width $\tw$ of a graph $G=(V,E)$ or show that the treewidth is at least $\tw+1$. We will show how we can find a graph $G'$ of at most $(1-1/16\tw^2)|V|$ nodes of treewidth at most $\tw$, such that we can construct a tree decomposition of $G$ of width at most $2\tw$ from a tree decomposition of $G'$ of width $\tw$. We compute a tree decomposition of $G'$ of width $\tw$ recursively and, from it, construct a tree decomposition of $G$ of width at most $2\tw$. Finally, we use the algorithm of Section \ref{KloksAlg} for a tree decomposition of $G$ of width $\tw$.
	
	The total running time for a graph with $n$ nodes can be computed as 
	$T(n) \le T((1-1/16\tw^2)n)+\Oh(n)=\Oh(n)$ for every fixed value of $\tw$. 
	
	We will now show how to find $G'$. In the algorithms, we assume an arbitrary but fixed numbering of the nodes. We first recursively remove all nodes of degree one (since a tree decomposition of the remaining graph can easily be extended to a tree decomposition of the original graph of the same width).
	
	Let $S$ be the nodes of degree at most $4\tw$. We compute a maximal matching $M$ in $G$ restricted to the edges, such that at least one endpoint is in $S$. If $|M| \ge |V|/16\tw^2$, we shrink the edges obtaining a minor $G'$ of $G$ with at most $(1-1/16\tw^2)|V|$ nodes and a treewidth of at most $\tw$. If $\td$ is a tree decomposition of $G'$, we can easily obtain a tree decomposition of $G$ by replacing each occurrence of a node $[uv]$ created when shrinking a matching edge $uv$ by the two nodes $u$ and $v$. The width of this tree decomposition is at most twice the width of $\td$.
	
	\begin{figure}
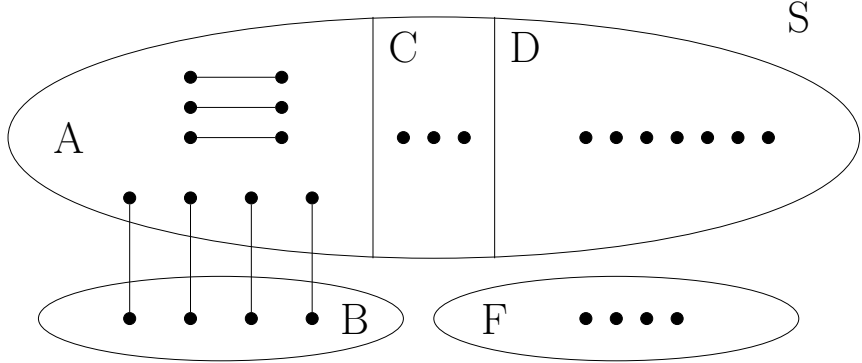

		\begin{center}
			\scalebox{0.8}{\figeins}
		\end{center}
		\caption{The nodes $S$ of degree at most $4\tw$ are partitioned into the nodes $A$ that are matched, their neighours $C$, and the remaining nodes $D$. The nodes from $V \setminus S$ are partitioned into the nodes $B$ that are matched, and the remaining nodes $F$.}\label{figureperkovic}
	\end{figure}
		
	Assume now that $|M| < |V|/16\tw^2$. 
	Let $A$ be the matched nodes of $S$ and $B$ be the nodes in $V \setminus S$ that are matched to a node in $A$. Let $C$ be the neighbors of $A$ within $S$, $D=S \setminus A \setminus C$ and $F=V \setminus S \setminus B$ (see Figure \ref{figureperkovic}). Notice the following: 
	\begin{itemize}
		\item $|S| \ge |V|/2$: As the treewidth of $G$ is at most $\tw$, the number of edges of $G$ is at most $|V|\tw$ and, hence, the number of nodes of degree more than $4\tw$, i.e.,~the number of nodes not in $S$, is at most $|V|/2$.
		\item $|A| \le |V|/8\tw^2$: since $|M| < |V|/16\tw^2$ and each edge in $M$ can enforce at most two nodes in $A$
		\item $|B| \le |V|/16\tw^2$: since each edge in $M$ can enforce at most one node in $B$
		\item $|C| \le |V|/2\tw$: since each node in $A$ has at most $4\tw$ neighbors (in $C$).
		\item $|D| = |S|-|A|-|C| \ge |V|/2-|V|/8\tw^2-|V|/2\tw \ge |V|/5$, where the last estimate uses the fact that $\tw \ge 2$.
		\item All neighbors of a node $u \in D$ are in $B$ since all neighbors are matched (otherwise we could increase the matching), and there are no neighbors in $A$ by construction. As we assume that all nodes have degree at least $2$, there are at least $2$ neighbors. 
	\end{itemize}
	
	For $v \in V$, let $N(v)$ be the neighbors of $v$ in $G$.
	Now we try to assign each node $w \in D$ to a pair $(u,v)$ of distinct nodes in $N(w)$, such that no pair of nodes has more that $\tw+1$ nodes from $D$ assigned to it. Notice that $N(w) \subseteq B$  for all $w \in D$ and hence we are considering pairs of nodes in $B$ here. We do this greedily until no further assignment is possible, i.e.,~if $U$ is the set of nodes that is not assigned to a pair, all pairs of neighbors of a node in $U$ already have $\tw+1$ nodes assigned to them. We do this in linear time as follows: We make a list of pairs of nodes of $B$ in which we insert for each $w \in D$ all pairs of neighbors of $w$. As $w$ has at most $4\tw$ neighbors, this list will contain $\Oh(|D|\tw^2) \subseteq \Oh(|V|)$ pairs of nodes. We sort it lexicographically. We iterate over the list and assign vertices to pairs of the list if they are not assigned yet and the pair has assigned less than $\tw+1$ vertices so far. 
	
	As stated in Section \ref{basic}, we know that for each $w \in U$, we can add edges $(u,v)$ to $G$ for all pairs of neighbors of $w$ without increasing the treewidth of the graph. Let $G^*$ be the graph obtained by adding these edges for all nodes in $U$ and removing the nodes of $U$ and all adjacent edges. The neighbors with respect to $G$ of $w$ form a clique in $G^*$ and hence there will be a bag containing these nodes in each tree decomposition of $G^*$. We can construct a tree decomposition of $G$ from a tree decomposition of $G^*$ by adding a bag for each $w \in U$ that contains $\{w \} \cup N(w)$ and attaching it to an arbitrary bag containing $N(w)$. We will later discuss how we can efficiently find these bags for the nodes in $U$. We will now show that $|U| \ge |V|/8\tw^2$ and, hence, we can recursively search for a tree decomposition of $G^*$ (as in the case where we have a large matching).
	
	Assume $U$ contains less than $|V|/8\tw^2$ nodes. Consider the graph $\hat G$ obtained as follows: Start with $G[B \cup (D \setminus U)]$ and for each $w \in D \setminus U$ shrink $w$ to either $u$ or $v$ (chosen arbitrarily) if $(u,v)$ is the pair of nodes assigned to $w$. Notice that the nodes of $\hat G$ are $B$ and $uv$ will be an edge of $\hat G$ if the pair $(u,v)$ was chosen for a node $w \in D \setminus U$. As $|D \setminus U| \ge |V|/8$ and for each pair of nodes $(u,v)$ at most $\tw+1$ nodes in $D \setminus U$ are assigned to it, $\hat G$ has at least $|V|/8(\tw + 1)$ edges. As the number of edges of a graph of treewidth $\tw$ with $n$ vertices is at most $\tw n$, as stated in Section \ref{basic}, the treewidth of $\hat G$ is at least $\tw+1$ since $\hat{G}$ has at most $|V|/16\tw^2$ nodes and at least $|V|/8(\tw+1)>|V|/16\tw$ edges. This is a contradiction since $\hat G$ is a minor of $G$.
	
	We will now discuss how we find a bag containing $N(w)$ for all $w \in U$ in linear time. We make a list that contains all subsets of vertices of every bag with a link to the originating bag. The size of the list is at most the number of bags times $2^{\tw+1}$, which is in $\Oh(n)$ and each entry is a subset of at most $\tw$ nodes and hence has size $\Oh(1)$. We sort this list lexicographically. Furthermore, we lexicographically sort the list containing $N(w)$ for all $w \in U$. We can now find a bag that contains $N(w)$ by iterating over the two lists in parallel (as in the merge of merge-sort).	
	
}{}

	\ls{}{\vspace*{-1.8em}}
	\section{Conclusion}
	
	We gave simpler descriptions of two algorithms: The algorithm of Bodlaender and Kloks \cite{DBLP:journals/jal/BodlaenderK96} that computes an optimal tree decomposition for a graph $G$ given a (non-optimal) tree decomposition of bounded width in linear time, and the algorithm of Bodlaender \cite{DBLP:journals/siamcomp/Bodlaender96} that uses the first algorithm to compute an optimal tree decomposition of a graph with bounded treewidth in linear time.
	
	Although we were able to shorten the text significantly, the description is still too long to become part of textbooks. We hope that even simpler descriptions of the algorithms can be found in the future, which will finally allow these algorithms to be shown in textbooks.
	
 \ls{}{\vspace*{-1.8em}}
	  
  \ls{\bibliography{references}}{\bibliography{references_nodoi}}
	  	
\end{document}